\newtheorem{theorem}{Theorem}[section]
\newtheorem{lemma}[theorem]{Lemma}
\newtheorem{proposition}[theorem]{Proposition}
\theoremstyle{definition}
\newtheorem{definition}[theorem]{Definition}
\theoremstyle{remark} \theoremstyle{remark}
\newtheorem{remark}[theorem]{Remark}
\numberwithin{equation}{section}
\DeclareMathOperator*{\esssup}{\mathrm{ess\,sup}}
\title[Continuum Jump Model with Attraction]{Evolution of States of a Continuum Jump Model with Attraction}
\author{ Yuri  Kozitsky}
\address{Instytut Matematyki, Uniwersytet Marii Curie-Sk{\l}odowskiej, 20-031 Lublin, Poland}
\email{jkozi@hektor.umcs.lublin.pl}
\begin{document}

\subjclass{34G10; 47D06; 60J80}%

\keywords{Markov evolution, configuration space, stochastic
semigroup, sun-dual semigroup,  Kawasaki model correlation function,
scale of Banach spaces }

\begin{abstract}

We study a model of an infinite system of point particles in
$\mathds{R}^d$ performing random jumps with attraction. The system's
states are probability measures on the space of particle
configurations, and their evolution is described by means of
Kolmogorov and Fokker-Planck equations. Instead of solving these
equations directly we deal with correlation functions evolving
according to a hierarchical chain of differential equations, derived
from the Kolmogorov equation. Under quite natural conditions imposed
on the jump kernels -- and analyzed in the paper -- we prove that
this chain has a unique classical  sub-Poissonian solution on a
bounded time interval. This gives a partial answer to the question
whether the sub-Poissonicity is consistent with any kind of
attraction. We also discuss possibilities to get a complete answer
to this question.

\end{abstract}

\maketitle

%\tableofcontents

\section{Introduction}

\label{S1}

\subsection{Setup}
In this work, we deal with the  model introduced and studied in
\cite{BKKut}. It describes an infinite system of point particles
placed in $\mathds{R}^d$ which perform random jumps with attraction.
To the best of our knowledge, \cite{BKKut} and the present research
are the only works where the dynamics of an infinite particle system
of this kind has been studied hitherto.

The phase space of the model is the set $\Gamma$ of all subsets
$\gamma \subset \mathds{R}^d$ such that the set $\gamma\cap\Lambda$
is finite whenever $\Lambda \subset \mathds{R}^d$ is compact. It is
equipped with a topology, see below, and thus with a $\sigma$-field
of measurable subsets. Thereby, one can consider probability
measures on $\Gamma$ as states of the system. Among them there are
Poissonian states in which the particles are independently
distributed over $\mathds{R}^d$. Such states are completely
characterized by the density of the particles. In
\emph{sub-Poissonian} states,  the dependence between the  positions
of the particles is controlled in a certain way (see the next
subsection),  and the particles' density is still an important
characteristic of the state. For an infinite particle system with
repulsion, in \cite{BKKK} the evolution of the system's states
$\mu_0 \mapsto \mu_t$ in the set of sub-Poissonian measures was
shown to hold  for $t< T$ with some $T < \infty$. Then in \cite{BK1}
this result was improved by constructing the global in time
evolution of states. Thus, a paramount question regarding such
models is whether the sub-Poissonicity is consistent with some sort
of attraction, and -- if yes -- for which sort and on which time
intervals. In this paper, we give a partial answer to this question.
Namely, we present quite a reasonable condition on the attraction,
see (\ref{As4}) below, under which -- as we show -- the correlation
functions evolve $k_0 \mapsto k_t$ and remain sub-Poissonian on a
bounded time interval. This result extends the corresponding result
of \cite{BKKut} in the following directions: (i) the  evolution $k_0
\mapsto k_t$ is constructed as a classical solution of the
corresponding Cauchy problem, not in a weak sense; (ii) our result
is valid for much more general types of attraction (see subsections
3.3 and 3.4 below). At the same time, the following problems remain
open: (a) proving that each $k_t$ is the correlation function of a
unique sub-Poissonian state; (b) continuing the evolution $k_0
\mapsto k_t$ to all $t>0$. In subsection 3.4 below, we discuss
possibilities to solve them.

\subsection{Presenting the result}
States of an infinite particle system are usually characterized by
means of their values $\mu(F)$ on {\it observables} $F:\Gamma \to
\mathds{R}$, defined  as
\begin{equation*}
 % \label{0}
  \mu(F) = \int_{\Gamma} F d \mu.
\end{equation*}
The system's evolution is supposed to be Markovian and hence
described by the Kolmogorov equation
\begin{equation}
  \label{1}
 \frac{d}{dt} F_t = L F_t, \qquad F_t|_{t=0} = F_0,
\end{equation}
where the operator $L$ specifies the model. Alternatively, the
evolution of states is derived from the Fokker-Planck equation
\begin{equation}
  \label{1a}
\frac{d}{dt} \mu_t = L^* \mu_t, \qquad \mu_t|_{t=0} = \mu_0,
\end{equation}
related to that in (\ref{1}) by the duality $\mu_t(F_0) =
\mu_0(F_t)$. For the model considered in this work, the operator $L$
is
\begin{eqnarray}
  \label{1L}
 (LF)(\gamma)  = \sum_{x\in \gamma} \int_{\mathds{R}^d}
 a(x,y)\left[1+\epsilon (x,y|\gamma)\right]\left[ F(\gamma\setminus x \cup y) -
 F(\gamma)\right] d y ,
\end{eqnarray}
with
\begin{equation}
  \label{M1A}
\epsilon (x,y|\gamma) = \sum_{z\in \gamma\setminus x} b(x,y|z).
\end{equation}
The quantity $b(x,y|z)\geq 0$ describes the increase of the jump
rate from $x\in \gamma$ to $y\in \mathds{R}^d$ caused by the
particle located at $z\in \gamma\setminus x$. Then $\epsilon
(x,y|\gamma)$ is the (multiplicative) increase of the corresponding
jump rate caused by the whole configuration $\gamma$. For $\epsilon
\equiv 0$, (\ref{1L}) turns into the generator of free jumps, see,
e.g., \cite{BK}.

As is usual for models of this kind, the direct meaning of (\ref{1})
or (\ref{1a}) can only be given for states of finite systems, cf.
\cite{K}. In this case, the Banach space where the Cauchy problem in
(\ref{1a}) is defined can be the space of signed measures with
finite variation. For infinite systems, the evolution is described
by means of correlation functions, see \cite{BKKK,BKKut,KK} and the
references quoted in these works. In the present paper, we follow
this approach the main idea of which can be outlined as follows. Let
$\varTheta$ be the set  of all compactly supported continuous
functions $\theta:\mathbb{R}^d\to (-1,0]$. For a state $\mu$, its
{\it Bogoliubov} functional $B_\mu : \varTheta \to \mathds{R}$ is
set to be
\begin{equation}
  \label{I1}
B_\mu (\theta) = \int_{\Gamma} \prod_{x\in \gamma} ( 1 + \theta (x))
\mu( d \gamma), \qquad \theta \in \varTheta.
\end{equation}
The function $\gamma \mapsto \prod_{x\in \gamma} ( 1 + \theta (x))$
is bounded and measurable for each  $\theta \in \varTheta$; hence,
(\ref{I1}) makes sense for each measure.
 For the homogeneous Poisson measure $\pi_\varkappa$,
$\varkappa>0$, we have
\begin{equation*}
  %\label{I2}
B_{\pi_\varkappa} (\theta) = \exp\left(\varkappa
\int_{\mathbb{R}^d}\theta (x) d x \right).
\end{equation*}
In state $\pi_\varkappa$, the particles are independently
distributed over $\mathds{R}^d$ with density $\varkappa$. The set of
{\it sub-Poissonian} states $\mathcal{P}_{\rm exp}(\Gamma)$ is then
defined as that containing all those states $\mu$ for which $B_\mu$
can be continued to an exponential type entire function of $\theta
\in L^1 (\mathbb{R}^d)$. This means that it can be written down in
the form
\begin{eqnarray}
  \label{I3}
B_\mu(\theta) = 1+ \sum_{n=1}^\infty
\frac{1}{n!}\int_{(\mathbb{R}^d)^n} k_\mu^{(n)} (x_1 , \dots , x_n)
\theta (x_1) \cdots \theta (x_n) d x_1 \cdots d x_n,
\end{eqnarray}
where $k_\mu^{(n)}$ is the $n$-th order correlation function of the
state $\mu$. It is a symmetric element of $L^\infty
((\mathbb{R}^d)^n)$ for which
\begin{equation}
\label{I4}
  \|k^{(n)}_\mu \|_{L^\infty
((\mathbb{R}^d)^n)} \leq C \exp( \vartheta n), \qquad n\in
\mathbb{N}_0,
\end{equation}
with some $C>0$ and $\vartheta \in \mathbb{R}$. Sometimes,
(\ref{I4}) is called \emph{Ruelle bound}, cf. \cite[Chapter
4]{Ruelle}. Note that (\ref{I3}) can be viewed as an analog of the
Taylor expansion of the characteristic function of a probability
measure. That is why, $k^{(n)}_\mu$ are also called \emph{moment
functions}. Their evolution is described by a chain of differential
equations derived from that in (\ref{1}). The central problem of
this work is the existence of classical solutions of this chain
satisfying (\ref{I4}) with possibly time-dependent $C$ and
$\vartheta$. Its solution is given in Theorem \ref{1tm}, formulated
in subsection 3.2 and proved in Section \ref{S4}. In Section
\ref{S2}, we give some necessary information on the methods used in
the paper and specify the model. In subsection 3.1, we place the
mentioned chain of equations into suitable Banach spaces, that is
mostly performed by defining the corresponding operators. Then we
formulate Theorem \ref{1tm} and analyze the assumptions regarding
the jump kernels under which we then prove this statement. In
subsection 3.4, we give some comments on the result and the
assumptions, including discussing open problems related to the
model, and compare our result with the corresponding result of
\cite{BKKut}. Section \ref{S4} is dedicated to the proof of Theorem
\ref{1tm}.

\section{Preliminaries and the Model}

\label{S2}

Here we briefly recall the main notions relevant to the subject --
for further information we refer to \cite{Albev,BKKK,BKKut,KK} and
the literature quoted in these works.

\subsection{Configuration spaces}

Let $\mathcal{B}(\mathds{R}^d)$ and $\mathcal{B}_{\rm
b}(\mathds{R}^d)$ denote the sets of all Borel and all bounded Borel
subsets of $\mathds{R}^d$, respectively. The configuration space
$\Gamma$, equipped with the vague topology, is homeomorphic to a
separable metric (Polish) space, cf. \cite{Albev,Tobi}. Let
$\mathcal{B}(\Gamma)$ be the corresponding Borel $\sigma$-field. For
$\Lambda\in \mathcal{B}(\mathds{R}^d)$, the set $\Gamma_\Lambda =
\{\gamma\in \Gamma: \gamma \subset \Lambda\}$ is clearly in
$\mathcal{B}(\Gamma)$, and hence
\[
 \mathcal{B}(\Gamma_\Lambda):=\{ A \cap \Gamma_\Lambda : A \in \mathcal{B}(\Gamma)\}
\]
is a sub-field of $\mathcal{B}(\Gamma)$. The projection
$p_{\Lambda}:\Gamma\to \Gamma_\Lambda$ defined by  $p_\Lambda
(\gamma) = \gamma_\Lambda=\gamma \cap \Lambda$ is  measurable. Then,
for each Borel $\Lambda$ and $A_\Lambda \in
\mathcal{B}(\Gamma_\Lambda)$, we have that
\begin{equation*}
% \label{4}
 p^{-1}_\Lambda(A_\Lambda) :=\{ \gamma\in \Gamma: p_\Lambda (\gamma) \in A_\Lambda \} \in
 \mathcal{B}(\Gamma).
\end{equation*}
Let $\mathcal{P}(\Gamma)$ denote the set of all  probability
measures on $(\Gamma, \mathcal{B}(\Gamma))$. For a given $\mu\in
\mathcal{P}(\Gamma)$, its projection on $(\Gamma_\Lambda,
\mathcal{B} (\Gamma_\Lambda))$ is defined as
\begin{equation}
 \label{5}
\mu^\Lambda (A_\Lambda) = \mu\left(p^{-1}_\Lambda (A_\Lambda)
\right), \qquad A_\Lambda \in \mathcal{B}(\Gamma_\Lambda).
\end{equation}
Let $\Gamma_0$ be the set of all finite $\gamma \in \Gamma$. Then
$\Gamma_0\in \mathcal{B}(\Gamma)$ as each of $\gamma \in \Gamma_0$
lies in some $\Lambda \in \mathcal{B}_{\rm b}(\mathds{R}^d)$, and
hence belongs to $\Gamma_\Lambda$. It can be proved that a function
$G:\Gamma_0 \to \mathds{R}$ is
$\mathcal{B}(\Gamma)/\mathcal{B}(\mathds{R} )$-measurable if and
only if, for each $n\in \mathds{N}_0$, there exists a symmetric
Borel function $G^{(n)}: (\mathds{R}^{d})^{n} \to \mathds{R}$ such
that
\begin{equation}
 \label{7}
 G(\eta) = G^{(n)} ( x_1, \dots , x_{n}),
\end{equation}
for $\eta = \{ x_1, \dots , x_{n}\}$ .
\begin{definition}
  \label{Gdef}
A measurable function $G:\Gamma_0 \to \mathds{R}$ is said have
bounded support if: (a) there exists $\Lambda \in \mathcal{B}_{\rm
b} (\mathds{R}^d)$ such that $G(\eta) = 0$ whenever $\eta\cap
\Lambda^c \neq \emptyset$; (b) there exists $N\in \mathds{N}_0$ such
that $G(\eta)=0$ whenever $|\eta|
>N$. Here $\Lambda^c := \mathds{R}^d
\setminus \Lambda$ and $|\cdot |$ stands for cardinality.
\end{definition}
The Lebesgue-Poisson measure $\lambda$ on $(\Gamma_0,
\mathcal{B}(\Gamma_0))$ is defined by the following formula
\begin{eqnarray}
\label{8} \int_{\Gamma_0} G(\eta ) \lambda ( d \eta)  = G(\emptyset)
+ \sum_{n=1}^\infty \frac{1}{n! } \int_{(\mathds{R}^d)^{n}} G^{(n)}
( x_1, \dots , x_{n} ) d x_1 \cdots dx_{n},
\end{eqnarray}
which has to hold for all $G\in B_{\rm bs}(\Gamma_0)$.

In this work, we use the following (real) Banach spaces of functions
$g: \Gamma_0 \to \mathds{R}$. The first group consists of the spaces
$\mathcal{G}_\vartheta = L^1 (\Gamma_0 , w_\vartheta d \lambda)$,
indexed by  $\vartheta \in \mathds{R}$. Here we have set
$w_\vartheta (\eta) = \exp\left( \vartheta |\eta| \right)$. Hence
the norm of $\mathcal{G}_\vartheta$ is
\begin{equation}
  \label{w1}
 |g|_{\vartheta} = \int_{\Gamma_0} |g(\eta)| w_\vartheta (\eta)
 \lambda (d \eta).
\end{equation}
Along with this norm we also consider
\begin{equation}
  \label{w2}
 \| g \|_\vartheta := \esssup_{\eta \in \Gamma_0}\left\{ |g (\eta)| \exp\big{(} - \vartheta
  |\eta| \big{)} \right\},
\end{equation}
and then set $\mathcal{K}_\vartheta =\{ g:\Gamma_0 \to \mathds{R}:
\|g\|_\vartheta < \infty\}$. These spaces constitute the second
group which we use in the sequel. From (\ref{w1}) and (\ref{w2}) we
see that $\mathcal{K}_\vartheta$ is the dual space to
$\mathcal{G}_\vartheta$ with the duality
\begin{equation}
  \label{9f}
(G, k) \mapsto \langle \! \langle G, k \rangle \!\rangle :=
\int_{\Gamma_0} G(\eta) k(\eta) \lambda (d \eta),
\end{equation}
holding for $G\in\mathcal{G}_\vartheta$ and
$k\in\mathcal{K}_\vartheta$. Note that $B_{\rm bs}(\Gamma_0)$ is
contained in each $\mathcal{G}_\vartheta$ and each
$\mathcal{K}_\vartheta$, $\vartheta \in \mathds{R}$.

For $G\in B_{\rm bs}(\Gamma)$, we set
\begin{equation}
  \label{9a}
(KG)(\gamma) = \sum_{\eta \subset \gamma} G(\eta),
\end{equation}
where the sum is taken over all finite $\eta$.

\subsection{Correlation functions}
For a given $\mu \in \mathcal{P}_{\rm exp}(\Gamma)$, similarly as in
(\ref{7}) we introduce $k_\mu : \Gamma_0 \to \mathds{R}$ such that
$k_\mu(\emptyset) = 1$ and $k_\mu(\eta) = k^{(n)}_\mu (x_1, \dots ,
x_n)$ for $\eta = \{x_1, \dots , x_n\}$, $n\in \mathds{N}$, cf.
(\ref{I1}) and (\ref{I3}). With the help of the measure introduced
in (\ref{8}), the formulas  in (\ref{I1}) and (\ref{I3}) can be
combined into the following
\begin{eqnarray*}
 % \label{1fa}
 B_\mu (\theta)& = & \int_{\Gamma_0} k_\mu(\eta) \prod_{x\in \eta} \theta (x) \lambda (d\eta)=: \int_{\Gamma_0} k_\mu(\eta) e( \eta; \theta) \lambda (d \eta)
 \\[.2cm]
 & = &  \int_{\Gamma} \prod_{x\in \gamma} (1+ \theta (x)) \mu (d \gamma) =: \int_{\Gamma} F_\theta (\gamma) \mu(d
 \gamma). \nonumber
\end{eqnarray*}
Thereby, we can transform  the action of $L$ on $F$, as in
(\ref{1L}), to the action of $L^\Delta$ on $k_\mu$ according to the
rule
\begin{equation}
  \label{1g}
\int_{\Gamma}(L F_\theta) (\gamma) \mu(d \gamma) = \int_{\Gamma_0}
(L^\Delta k_\mu) (\eta) e(\eta;\theta)
 \lambda (d \eta).
\end{equation}
This will allow us to pass from (\ref{1}) to the corresponding
Cauchy problem for the correlation functions, cf. (\ref{16}) below.
The main advantage here is that $k_\mu$ is a function of {\em
finite} configurations.

For $\mu \in \mathcal{P}_{\rm exp}(\Gamma)$ and  $\Lambda \in
\mathcal{B}_{\rm b}(\mathds{R}^d)$, let $\mu^\Lambda$ be as in
(\ref{5}). Then $\mu^\Lambda$ is absolutely continuous with respect
to the restriction  $\lambda^\Lambda$ to
$\mathcal{B}(\Gamma_\Lambda)$  of the measure defined in (\ref{8}),
and hence we may write
\begin{equation}
\label{9c} \mu^\Lambda (d \eta ) = R^\Lambda_\mu (\eta)
\lambda^\Lambda ( d \eta), \qquad \eta \in \Gamma_\Lambda.
\end{equation}
Then the correlation function $k_\mu$ and the Radon-Nikodym
derivative $R_\mu^\Lambda$ satisfy
\begin{eqnarray*}
  %\label{9d}
k_\mu(\eta) & = & \int_{\Gamma_\Lambda} R^\Lambda_\mu (\eta \cup
\xi) \lambda^\Lambda ( d\xi).
\end{eqnarray*}
By (\ref{9a}), (\ref{5}), and (\ref{9c}) we get
\begin{equation*}
 % \label{9e}
\int_{\Gamma} (KG)(\gamma) \mu(d\gamma) = \langle \! \langle G,
k_\mu \rangle \!\rangle,
\end{equation*}
holding for each $G\in B_{\rm bs}(\Gamma_0)$ and $\mu \in
\mathcal{P}_{\rm exp}(\Gamma)$, see (\ref{9f}). Define
\begin{equation*}
  %\label{9g}
B^\star_{\rm bs} (\Gamma_0) =\{ G\in B_{\rm bs}(\Gamma_0):
(KG)(\gamma) \geq 0 \ {\rm for} \ {\rm all} \ \gamma\in \Gamma\}.
\end{equation*}
By \cite[Theorems 6.1 and 6.2 and Remark 6.3]{Tobi} one can prove
the next statement.
\begin{proposition}
  \label{Gpn}
Let  a measurable function $k : \Gamma_0 \to \mathds{R}$  have the
following properties:
\begin{eqnarray*}
  %\label{9h}
& (a) & \ \langle \! \langle G, k \rangle \!\rangle \geq 0, \qquad
{\rm for} \ {\rm all} \ G\in B^\star_{\rm bs} (\Gamma_0);\\[.2cm]
& (b) & \ k(\emptyset) = 1; \qquad (c) \ \ k(\eta) \leq
 C^{|\eta|} ,
\nonumber
\end{eqnarray*}
with (c) holding for some $C >0$ and $\lambda$-almost all $\eta\in
\Gamma_0$. Then there exists a unique $\mu \in \mathcal{P}_{\rm
exp}(\Gamma)$ for which $k$ is the correlation function.
\end{proposition}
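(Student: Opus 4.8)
The plan is to reconstruct $\mu$ as the projective limit of explicitly given local measures and then to extract sub‑Poissonicity and uniqueness from the three hypotheses. First I would fix $\Lambda \in \mathcal{B}_{\rm b}(\mathds{R}^d)$ and define the candidate local density as the formal $K^{-1}$‑image of $k$ restricted to $\Lambda$,
\[
R^\Lambda(\eta) := \int_{\Gamma_\Lambda} (-1)^{|\xi|}\, k(\eta\cup\xi)\,\lambda(d\xi), \qquad \eta \in \Gamma_\Lambda .
\]
Hypothesis (c) bounds $|k(\eta\cup\xi)| \le C^{|\eta|}C^{|\xi|}$, and since $\int_{\Gamma_\Lambda} C^{|\xi|}\lambda(d\xi) < \infty$ this integral is absolutely convergent and $R^\Lambda \in L^1(\Gamma_\Lambda,\lambda^\Lambda)$, with norm controlled by $C$ and the volume of $\Lambda$. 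The Lebesgue–Poisson coalescence identity that makes $K$ and the above inversion mutually inverse then gives back $k(\eta) = \int_{\Gamma_\Lambda} R^\Lambda(\eta\cup\xi)\,\lambda^\Lambda(d\xi)$ for $\eta \subset \Lambda$, and, by (b), $\int_{\Gamma_\Lambda} R^\Lambda\, d\lambda^\Lambda = k(\emptyset) = 1$. Setting $\mu^\Lambda(d\eta):= R^\Lambda(\eta)\,\lambda^\Lambda(d\eta)$ thus produces, for every bounded $\Lambda$, a finite signed measure of total mass one on $\Gamma_\Lambda$; the identity just displayed for $k$ also shows that the family $\{\mu^\Lambda\}_\Lambda$ is consistent, so by the Kolmogorov‑type extension theorem for $\Gamma$ — which is where I invoke \cite[Theorems 6.1 and 6.2]{Tobi} — there is a unique signed measure $\mu$ on $(\Gamma,\mathcal{B}(\Gamma))$ with these projections, and $k$ is its correlation function by construction.

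The crux is to upgrade $\mu$ to an honest \emph{probability} measure, i.e. to show $R^\Lambda \ge 0$ $\lambda^\Lambda$‑a.e.\ for every $\Lambda$. Here I would use (a): for $G \in B_{\rm bs}(\Gamma_0)$ supported in $\Gamma_\Lambda$, the same coalescence identity rewrites the duality \eqref{9f} as
\[
\langle\!\langle G, k\rangle\!\rangle = \int_{\Gamma_\Lambda} (KG)(\eta)\, R^\Lambda(\eta)\, \lambda^\Lambda(d\eta),
\]
so (a) asserts precisely that $\int_{\Gamma_\Lambda} (KG)\, R^\Lambda\, d\lambda^\Lambda \ge 0$ whenever $KG \ge 0$ on $\Gamma_\Lambda$. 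It then remains to verify that the cone of functions $KG$, with $G \in B_{\rm bs}(\Gamma_0)$ supported in $\Gamma_\Lambda$ and $KG \ge 0$, is rich enough to detect nonnegativity of an $L^1(\Gamma_\Lambda,\lambda^\Lambda)$‑function; this is done by an approximation argument carried out sector by sector in the particle number, i.e.\ on the sets $\{\eta \in \Gamma_\Lambda : |\eta| = n\}$, and it is exactly what \cite[Remark 6.3]{Tobi} supplies. Granting it, $R^\Lambda \ge 0$ a.e., hence each $\mu^\Lambda$ — and therefore $\mu$ — is a probability measure. Finally, since its correlation function $k$ satisfies (c), i.e.\ lies in $\mathcal{K}_{\log C}$, the series in \eqref{I3} converges and defines $B_\mu$ as an entire function of exponential type on $L^1(\mathds{R}^d)$ satisfying \eqref{I4}; thus $\mu \in \mathcal{P}_{\rm exp}(\Gamma)$.

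For uniqueness, suppose $\mu_1,\mu_2 \in \mathcal{P}_{\rm exp}(\Gamma)$ both have correlation function $k$. Then by \eqref{I3} their Bogoliubov functionals coincide on $\varTheta$, $B_{\mu_1}(\theta) = B_{\mu_2}(\theta)$ for all $\theta \in \varTheta$; equivalently $\int_\Gamma (KG)\,d\mu_1 = \langle\!\langle G,k\rangle\!\rangle = \int_\Gamma (KG)\,d\mu_2$ for every $G \in B_{\rm bs}(\Gamma_0)$. Restricting to $G$ supported in a fixed bounded $\Lambda$ forces $\mu_1^\Lambda = \mu_2^\Lambda$, because the functions $KG$ obtained this way separate the measures on $\Gamma_\Lambda$ (a layer‑by‑layer argument in $|\eta|$), and letting $\Lambda \uparrow \mathds{R}^d$ yields $\mu_1 = \mu_2$. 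I expect the positivity step in the second paragraph to be the genuine obstacle: converting the integral inequality furnished by (a) into the pointwise bound $R^\Lambda \ge 0$ requires a real approximation argument inside each particle‑number sector, and it is precisely for this that the cited results of \cite{Tobi} are indispensable; the convergence estimates and the projective‑limit construction, by contrast, are routine consequences of the Ruelle bound (c).
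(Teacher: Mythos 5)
Your sketch is, in substance, the standard Lenard/Kondratiev--Kuna argument; the paper itself gives no proof beyond the citation to \cite[Theorems 6.1, 6.2 and Remark 6.3]{Tobi}, and you have unpacked that citation faithfully: invert $K$ locally to get $R^\Lambda$, control the alternating sum via the Ruelle bound, identify the $K$-transform of $R^\Lambda$ with $k$ to get normalization and consistency, use (a) together with the density of the positive cone $\{KG\}$ in each particle-number sector for positivity, and separate measures by $KG$'s for uniqueness.

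Two points should be tightened. First, and more seriously, you invoke a ``Kolmogorov-type extension'' to pass from the consistent family $\{\mu^\Lambda\}$ to a \emph{signed} measure $\mu$ on $\Gamma$, and only afterwards prove positivity. The projective-limit theorem is a theorem about probability measures; for signed measures it fails without uniform control of total variations, and here that control is absent --- your own estimate gives only $\|R^\Lambda\|_{L^1(\lambda^\Lambda)} \leq \exp\left(2C\,\mathrm{vol}(\Lambda)\right)$, which diverges as $\Lambda \uparrow \mathds{R}^d$. The fix is to reorder: establish $R^\Lambda \geq 0$ $\lambda^\Lambda$-a.e.\ first (your second paragraph), so that each $\mu^\Lambda$ is a genuine probability measure on $\Gamma_\Lambda$, and only then extend; with that ordering the argument is sound and matches the cited source. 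Second, a small but real gap: hypothesis (c) only gives the one-sided bound $k(\eta) \leq C^{|\eta|}$, yet you use $|k(\eta\cup\xi)| \leq C^{|\eta|+|\xi|}$ to justify absolute convergence of the alternating integral. You should first note that (a) already forces $k \geq 0$ a.e.\ --- test against symmetrized products $G^{(n)}(x_1,\dots,x_n) = f(x_1)\cdots f(x_n)$ with $f \geq 0$ compactly supported, which belong to $B^\star_{\rm bs}(\Gamma_0)$ since $KG \geq 0$ --- after which (c) becomes the two-sided Ruelle bound you actually need.
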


\subsection{The model}

The model which we study is specified by the operator given in
(\ref{1L}). The jump kernel $a$ is supposed to satisfy
\begin{equation} \label{As1}
a(x,y) = a(y,x)\geq 0, \qquad \sup_{y\in \mathds{R}^d}
\int_{\mathds{R}^d} a(x,y) dx = 1.
\end{equation}
Regarding the quantities in (\ref{M1A}) we assume
\begin{equation}
\label{M6A}
 \sup_{x,y \in \mathds{R}^d} \int_{\mathds{R}^d} b(x,y|z) d z
=:\langle b \rangle < \infty , \quad \sup_{x,y, z \in \mathds{R}^d}
b(x,y|z) =: \bar{b} < \infty,
\end{equation}
Moreover, let us define
\begin{eqnarray}
  \label{As2}
  \phi_{+} (x,y) & = & \int_{\mathds{R}^d} a(z,x) b(z,x|y) d z,
 \\[.2cm]
\phi_{-} (x,y) & = & \int_{\mathds{R}^d} a(x,z) b(x,z|y) d z.
\nonumber
\end{eqnarray}
By (\ref{As1}) and (\ref{M6A}) we have that
\begin{equation}
  \label{As2a}
\phi_{\pm}(x,y) \leq \bar{b}, \qquad \ \ {\rm for} \ {\rm all} \ x,y
\in \mathds{R}^d.
\end{equation}
\begin{remark}
  \label{Asrk1}
The quantities defined in (\ref{As2}) can be given the following
interpretation: $\phi_{+} (x,y)$  is the rate with which the
particle located at $y$ attracts other particles to jump (from
somewhere) to $x$; $\phi_{-} (x,y)$  is the rate with which the
particle located at $y$ forces that  located at $x$ to jump (to
anywhere). In the latter case, the particle at $y$ `pushes out' the
one at $x$. Thus, $\phi_{+} (x,y)$ and $\phi_{-} (x,y)$ can be
called attraction and repulsion rates, respectively.
\end{remark}
Now we set
\begin{equation}
  \label{As3}
  \Phi_{\pm }(\eta) =  \sum_{x\in \eta} \sum_{y\in \eta\setminus x}
\phi_{\pm} (x,y),
\end{equation}
which can be interpreted as the total rates of attraction and
repulsion of the configuration $\eta$, respectively. In addition to
(\ref{As1}) and (\ref{M6A}) we assume that the following holds
\begin{equation}
  \label{As4}
\exists \omega \geq 0 \ \ \forall \eta \in \Gamma_0 \qquad
\Phi_{+}(\eta) \leq \Phi_{-}(\eta ) +  \omega |\eta|.
\end{equation}
Note that, for some $c>0$ and all $\eta\in \Gamma_0$, by (\ref{M6A})
it follows that
\begin{equation}
  \label{BAst5}
\Phi_{-}(\eta ) +  \omega |\eta| \leq c |\eta|^2.
\end{equation}
 According to the condition in (\ref{As4}), the rate of
the jumps from somewhere to points close to the configuration $\eta$
(i.e., those which make $\eta$ denser) is in a sense dominated by
the rate of the jumps to anywhere, which thin it out.

\section{The Result}

\label{S3}

\subsection{The operators}
By means of (\ref{1L}) and (\ref{1g}) we calculate $L^\Delta$ and
present it in the form
\begin{equation}
  \label{O1}
 L^\Delta = A^\Delta + B^\Delta + C^\Delta + D^\Delta,
\end{equation}
with the entries
\begin{gather}
  \label{O2}
(A^\Delta k)(\eta)  =  \sum_{y\in \eta}\int_{\mathds{R}^d} a (x,y)
\left( 1 +
 \sum_{z\in \eta\setminus y} b(x,y|z)\right)
k(\eta\setminus y \cup x) dx, \qquad \\[.2cm]
(B^\Delta k)(\eta)  =  - \Psi(\eta) k(\eta), \nonumber
\end{gather}
where
\begin{equation}
  \label{O3}
  \Psi (\eta) = \sum_{x\in \eta} \int_{\mathds{R}^d} a(x,y) d y +
  \Phi_{-} (\eta).
\end{equation}
Furthermore,
\begin{gather}
  \label{O4}
 (C^\Delta k)(\eta) = \sum_{y\in \eta}
\int_{\mathds{R}^d}\int_{\mathds{R}^d}
a(x,y) b(x,y|z) k(\eta\setminus y \cup\{x,z\}) dx dz ,\qquad \\[.2cm]
\nonumber (D^\Delta k)(\eta) = -\int_{\mathds{R}^d} \left(\sum_{x\in
\eta} \int_{\mathds{R}^d} a(x,y) b (x,y|z) dy \right) k(\eta\cup z)
d z.
\end{gather}
As mentioned above, instead of directly dealing with the problem in
(\ref{1a}) we pass from $\mu_0$ to the corresponding correlation
function $k_{\mu_0}$ and then consider the problem
\begin{equation}
  \label{16}
\frac{d}{dt} k_t = L^\Delta k_t, \qquad k_t|_{t=0} = k_{\mu_0},
\end{equation}
with $L^\Delta$ given in (\ref{O1}) -- (\ref{O4}). Our aim now is to
place this problem into the corresponding Banach space. By
(\ref{I4}) we conclude that $\mu \in \mathcal{P}_{\rm exp}(\Gamma)$
implies that $k_\mu \in \mathcal{K}_\vartheta$ for some $\vartheta
\in \mathds{R}$. Hence, we assume that $k_{\mu_0}$ lies in some $
\mathcal{K}_{\vartheta_0}$. Then the formulas in (\ref{O1}) --
(\ref{O4}) can be used to define an unbounded operator acting in
some $\mathcal{K}_\vartheta$. Like in \cite{BKKK,KK} we take into
account that, for each $\vartheta'' < \vartheta'$, the space
$\mathcal{K}_{\vartheta''}$ is continuously embedded into
$\mathcal{K}_{\vartheta'}$, see (\ref{w2}), and use the ascending
scale of such spaces. This means that we are going to define
(\ref{16}) in a given $\mathcal{K}_\vartheta$ assuming that
$k_{\mu_0}\in \mathcal{K}_{\vartheta_0}\hookrightarrow
\mathcal{K}_{\vartheta}$.

For $\omega$ as in (\ref{As4}) and $\Psi$ as in (\ref{O3}), we set
\begin{equation}
  \label{O5}
  \Psi_\omega (\eta) = \omega |\eta| + \Psi(\eta).
\end{equation}
In the sequel, along with those as in (\ref{O2}) and (\ref{O3}) we
use the following operators
\begin{gather}
  \label{O6}
 (B^{\Delta,\omega}k )(\eta) = - \Psi_\omega (\eta) k(\eta),
 \\[.2cm]
(C^{\Delta,\omega}k )(\eta) = (C^{\Delta}k )(\eta) + \omega |\eta|
k(\eta). \nonumber
\end{gather}
Then the decomposition (\ref{O1}) can be rewritten
\begin{equation}
  \label{O7}
L^\Delta = A^\Delta + B^{\Delta,\omega} + C^{\Delta,\omega} +
D^{\Delta},
\end{equation}
with $A^\Delta$ and $D^\Delta$ being as above.

For a given $\vartheta \in \mathds{R}$, we define $L^\Delta$ in
$\mathcal{K}_\vartheta$ by means of  the following estimates. For
$k\in \mathcal{K}_\vartheta$, by (\ref{w2}) we have that
\begin{equation}
  \label{O7a}
|k(\eta)| \leq \|k\|_\vartheta e^{\vartheta |\eta|}, \ \ \qquad {\rm
for} \ \lambda-{\rm a.a.} \ \eta \in \Gamma_0.
\end{equation}
By means of the latter estimate and (\ref{As1}), (\ref{M6A}) we
obtain from (\ref{O2}), (\ref{O3}) and (\ref{O6}) that
\begin{gather}
  \label{O8}
\left\vert (C^{\Delta,\omega}k )(\eta) \right\vert \leq \left(
\omega + \langle b \rangle \right) |\eta| \exp\left[ \vartheta
(|\eta|+1)\right] \cdot \|k\|_{\vartheta}, \\[.2cm]
\left\vert (C^{\Delta}k )(\eta) \right\vert \leq \langle b \rangle
|\eta| \exp\left[ \vartheta (|\eta|+1)\right] \cdot
\|k\|_{\vartheta} , \nonumber \\[.2cm] \left\vert (D^{\Delta}k )(\eta)
\right\vert \leq \langle b \rangle |\eta| \exp\left[ \vartheta
(|\eta|+1)\right] \cdot \|k\|_{\vartheta} , \nonumber
\end{gather}
Now we use (\ref{As2a}) and (\ref{As3}) to obtain from (\ref{O2}),
(\ref{O3}), (\ref{O5}) the following
\begin{gather}
  \label{O9}
\left\vert (A^{\Delta}k )(\eta) \right\vert \leq \left( |\eta| +
\bar{b} |\eta|^2\right) e^{\vartheta |\eta|} \cdot \|k\|_{\vartheta}
,\\[.2cm]
\left\vert (B^{\Delta}k )(\eta) \right\vert \leq \left( |\eta| +
\bar{b} |\eta|^2\right) e^{\vartheta |\eta|}
\cdot \|k\|_{\vartheta} .\nonumber \\[.2cm] \left\vert (B^{\Delta,\omega}k
)(\eta) \right\vert \leq \left[ (1+\omega)|\eta| + \bar{b}
|\eta|^2\right] e^{\vartheta |\eta|} \cdot \|k\|_{\vartheta}
.\nonumber
\end{gather}
The estimates in (\ref{O8}) and (\ref{O9}) allow us to define
$(L^\Delta_\vartheta, \mathcal{D}^\Delta_\vartheta)$, where
\begin{equation}
  \label{C14}
\mathcal{D}^\Delta_\vartheta := \{ k \in \mathcal{K}_\vartheta:
|\cdot|^2 k \in \mathcal{K}_\vartheta \},
\end{equation}
\begin{lemma}
  \label{O1lm}
For each $\vartheta'' < \vartheta$, it follows that
$\mathcal{K}_{\vartheta''} \subset \mathcal{D}^\Delta_\vartheta$.
\end{lemma}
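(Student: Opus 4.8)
The plan is to show that any $k\in\mathcal{K}_{\vartheta''}$ with $\vartheta''<\vartheta$ lies in $\mathcal{D}^\Delta_\vartheta$, i.e.\ that $|\cdot|^2 k\in\mathcal{K}_\vartheta$, since the inclusion $\mathcal{K}_{\vartheta''}\subset\mathcal{K}_\vartheta$ itself is immediate from (\ref{w2}). So the real content is the estimate
\begin{equation*}
\big\| |\cdot|^2 k \big\|_\vartheta = \esssup_{\eta\in\Gamma_0}\Big\{ |\eta|^2\, |k(\eta)|\, e^{-\vartheta|\eta|}\Big\} < \infty,
\end{equation*}
which I would derive from the bound $|k(\eta)|\le \|k\|_{\vartheta''}\, e^{\vartheta''|\eta|}$ valid for $\lambda$-a.a.\ $\eta$ (this is (\ref{O7a}) with $\vartheta$ replaced by $\vartheta''$). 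Substituting this gives, for $\lambda$-a.a.\ $\eta$,
\begin{equation*}
|\eta|^2\,|k(\eta)|\,e^{-\vartheta|\eta|} \le \|k\|_{\vartheta''}\; |\eta|^2\, e^{-(\vartheta-\vartheta'')|\eta|}.
\end{equation*}

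First I would fix the auxiliary elementary fact that $\sup_{n\in\mathbb{N}_0} n^2 e^{-\delta n} < \infty$ for every $\delta>0$; concretely one can take $C_\delta := \sup_{s\ge 0} s^2 e^{-\delta s} = (2/(\delta e))^2$, or just invoke that $s\mapsto s^2 e^{-\delta s}$ is continuous on $[0,\infty)$ and tends to $0$ at infinity, hence is bounded. Applying this with $\delta := \vartheta-\vartheta''>0$ and $s:=|\eta|$ then yields $|\eta|^2 e^{-(\vartheta-\vartheta'')|\eta|}\le C_{\vartheta-\vartheta''}$ for all $\eta\in\Gamma_0$, so that
\begin{equation*}
\big\| |\cdot|^2 k \big\|_\vartheta \le C_{\vartheta-\vartheta''}\,\|k\|_{\vartheta''} < \infty,
\end{equation*}
which is exactly what membership in $\mathcal{D}^\Delta_\vartheta$ requires by (\ref{C14}). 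Together with $k\in\mathcal{K}_\vartheta$ (again from $\mathcal{K}_{\vartheta''}\hookrightarrow\mathcal{K}_\vartheta$), this places $k$ in $\mathcal{D}^\Delta_\vartheta$.

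There is no genuine obstacle here; the statement is essentially the observation that multiplication by the polynomial weight $|\eta|^2$ is absorbed by any strictly positive gap $\vartheta-\vartheta''$ in the exponential weights, which is precisely the mechanism (used repeatedly via the estimates (\ref{O8})--(\ref{O9})) that makes the ascending scale of spaces $\mathcal{K}_\vartheta$ the right setting for $L^\Delta$. The only point worth stating carefully is that all inequalities hold only for $\lambda$-almost all $\eta$, consistent with the essential supremum in (\ref{w2}); since the exceptional set does not depend on the weight, the a.e.\ bound passes directly to a bound on $\|\,|\cdot|^2 k\,\|_\vartheta$. One could also remark that the same argument shows more generally $\mathcal{K}_{\vartheta''}\subset\{k:\,|\cdot|^m k\in\mathcal{K}_\vartheta\}$ for every $m\in\mathbb{N}$ and every $\vartheta''<\vartheta$, but for the operator domain only $m=2$ is needed.
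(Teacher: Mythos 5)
Your proof is correct and follows essentially the same route as the paper's: substitute the defining bound $|k(\eta)|\le \|k\|_{\vartheta''}e^{\vartheta''|\eta|}$ and absorb the factor $|\eta|^2$ into the exponential gap $\vartheta-\vartheta''$, arriving at the same constant $4/(e^2(\vartheta-\vartheta'')^2)$. The only cosmetic difference is that the paper invokes $x e^{-\sigma x}\le 1/(e\sigma)$ (applied with $\sigma=(\vartheta-\vartheta'')/2$ and then squared), whereas you optimize $s^2 e^{-\delta s}$ directly.
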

\begin{proof}
 By means of (\ref{O7a}) and the inequality $x\exp(-\sigma x) \leq 1/ e \sigma$, $x,
\sigma
>0$, we get from (\ref{O8}) and (\ref{O9}) the following estimate,
\begin{equation*}
  %\label{O10}
 |\eta|^2 \left\vert k(\eta) \right\vert \leq \frac{4}{e^2
(\vartheta - \vartheta'')^2} \|k \|_{\vartheta''} e^{\vartheta
|\eta|},
\end{equation*}
which yields the proof.
\end{proof}
The same estimate and (\ref{O8}), (\ref{O9}) also yield
\begin{equation}
  \label{O11}
\|L^\Delta k \|_\vartheta \leq 2 \left( \frac{1 + \langle b
\rangle}{e(\vartheta - \vartheta'')}+ \frac{4\bar{b}}{e^2 (\vartheta
- \vartheta'')^2}\right) \|k\|_{\vartheta''},
\end{equation}
which allows us to define a bounded linear operator
$L^\Delta_{\vartheta \vartheta''} : \mathcal{K}_{\vartheta''} \to
\mathcal{K}_{\vartheta}$ the norm of which can be estimated by means
of (\ref{O11}). In what follows, we consider two types of operators
defined by the expression in (\ref{O1}) -- (\ref{O4}): (a) unbounded
operators $(L^\Delta_\vartheta, \mathcal{D}^\Delta_\vartheta)$,
$\vartheta\in \mathds{R}$, with domains as in (\ref{C14}) and Lemma
\ref{O1lm}; (b) bounded operators $L^\Delta_{ \vartheta
\vartheta''}$ as just described. These operators are related to each
other in the following way:
\begin{equation}
  \label{24a}
\forall \vartheta'' < \vartheta \ \  \forall k \in
\mathcal{K}_{\vartheta''} \qquad L^\Delta_{\vartheta
 \vartheta''} k = L^\Delta_{\vartheta} k.
\end{equation}

\subsection{The statement}

We assume that the initial state $\mu_0$ is fixed, which determines
$\vartheta_0 \in \mathds{R}$ by the condition that $k_{\mu_0}$ lies
in $\mathcal{K}_{\vartheta_0}$. Since
$\mathcal{K}_{\alpha''}\hookrightarrow \mathcal{K}_{\alpha'}$ for
$\vartheta'' < \vartheta'$, we take the least $\vartheta_0$
satisfying this condition. Then for $\vartheta > \vartheta_0$, we
consider in $\mathcal{K}_{\vartheta}$ the problem, cf. (\ref{16})
and Lemma \ref{O1lm},
\begin{equation}
  \label{T1}
\frac{d}{dt} k_t = L^\Delta_\vartheta k_t, \qquad k_t|_{t=0} =
k_{\mu_0} \in \mathcal{K}_{\vartheta_0}.
\end{equation}
\begin{definition}
  \label{S1df}
By a (classical) solution of (\ref{T1}) on a time interval, $[0,T)$,
$T\leq +\infty$, we mean a continuous map $[0,T)\ni t \mapsto k_t
\in \mathcal{D}^\Delta_\vartheta$ such that the map $[0,T)\ni t
\mapsto d k_t / dt\in \mathcal{K}_\vartheta$ is also continuous and
both equalities in (\ref{T1}) are satisfied.
\end{definition}
For $\omega\geq 0$ as in (\ref{O5}) and (\ref{O7}), we set, cf.
(\ref{M6A}),
\begin{equation}
  \label{T2}
T(\vartheta, \vartheta_0) = \frac{(\vartheta-
\vartheta_0)e^{-\vartheta}}{\omega + 2 \langle b \rangle},
\end{equation}
where $\vartheta$ and  $\vartheta_0$ are as in (\ref{T1}).
\begin{theorem}
  \label{1tm}
Let the conditions in (\ref{As1}) -- (\ref{As4}) be satisfied. Then,
for each $\vartheta > \vartheta_0$, the problem in (\ref{T1}) has a
unique solution on the time interval $[0, T(\vartheta,
\vartheta_0))$.
\end{theorem}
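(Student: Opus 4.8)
The plan is to construct the solution as the series $k_t = \sum_{n=0}^\infty \frac{t^n}{n!} L^\Delta_{\vartheta_n \vartheta_{n-1}}\cdots L^\Delta_{\vartheta_1 \vartheta_0} k_{\mu_0}$ obtained by iterating the bounded operators from (\ref{O11}), distributing the loss of exponential index over the interval. Fix $\vartheta > \vartheta_0$ and a target time $T < T(\vartheta, \vartheta_0)$. For $n \in \mathds{N}$ I would introduce the intermediate indices $\vartheta_j = \vartheta_0 + \frac{j}{n}(\vartheta - \vartheta_0)$, $j = 0, \dots, n$, so that each consecutive gap equals $(\vartheta - \vartheta_0)/n$. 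Applying the estimate (\ref{O11}) successively across these $n$ steps and using $\vartheta_j \le \vartheta$ to bound the exponential weights $e^{\vartheta_j}$ (this is where the $e^{-\vartheta}$ in (\ref{T2}) enters — one should be careful to track whether the relevant factor is $e^{\vartheta_j}$ from the `$+1$' in $|\eta|+1$ of (\ref{O8})) gives, with $\nu := \omega + 2\langle b\rangle$,
\begin{equation*}
\big\| L^\Delta_{\vartheta \vartheta_{n-1}} \cdots L^\Delta_{\vartheta_1 \vartheta_0} k_{\mu_0} \big\|_\vartheta \le \left( \frac{n\, \nu\, e^{\vartheta}}{\vartheta - \vartheta_0}\right)^n \| k_{\mu_0}\|_{\vartheta_0}.
\end{equation*}
Combined with $n^n / n! \le e^n$, the $n$-th term of the series is bounded by $\| k_{\mu_0}\|_{\vartheta_0}\, \big( e\, \nu\, e^{\vartheta}\, t / (\vartheta - \vartheta_0)\big)^n = \| k_{\mu_0}\|_{\vartheta_0}\,(t / T(\vartheta,\vartheta_0))^n$ up to the factor $e$, so the series converges absolutely in $\mathcal{K}_\vartheta$, uniformly on $[0,T]$, whenever $T < T(\vartheta,\vartheta_0)$ (absorbing the stray $e$ into the definition or simply noting strict inequality suffices after a small index sacrifice).

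Next I would verify that the limit $k_t$ is a classical solution in the sense of Definition \ref{S1df}. For this, pick any $\vartheta'$ with $\vartheta_0 < \vartheta' < \vartheta$ and, running the same estimate into $\mathcal{K}_{\vartheta'}$, show the series and its termwise $t$-derivative both converge in $\mathcal{K}_{\vartheta'}$ on a slightly shorter interval; since $\mathcal{K}_{\vartheta'} \hookrightarrow \mathcal{K}_\vartheta$ and $\mathcal{K}_{\vartheta'} \subset \mathcal{D}^\Delta_\vartheta$ by Lemma \ref{O1lm}, this places $k_t$ in $\mathcal{D}^\Delta_\vartheta$ with continuous derivative in $\mathcal{K}_\vartheta$. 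Differentiating the series term by term and using the consistency relation (\ref{24a}) to rewrite $L^\Delta_{\vartheta \vartheta'} k_t = L^\Delta_\vartheta k_t$ yields $\frac{d}{dt} k_t = L^\Delta_\vartheta k_t$, and the initial condition is immediate from the $n=0$ term. A standard covering argument extends the solution to the whole half-open interval $[0, T(\vartheta,\vartheta_0))$: given any $t^* < T(\vartheta,\vartheta_0)$, choose $\vartheta'$ close enough to $\vartheta_0$ that $t^* < T(\vartheta,\vartheta')$... actually more carefully, one exhausts $[0,T(\vartheta,\vartheta_0))$ by intervals on which the above construction (with adjusted intermediate index) converges, and glues.

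For uniqueness, suppose $k_t$ and $\tilde k_t$ are two solutions on $[0,\tau)$ with the same initial datum, and set $u_t = k_t - \tilde k_t \in \mathcal{D}^\Delta_\vartheta$, solving $\dot u_t = L^\Delta_\vartheta u_t$, $u_0 = 0$. The usual device is a Gronwall-type / scale argument: fix $t_1 < \tau$, and for $s \in [0, t_1]$ and an auxiliary decreasing family of indices interpolating between $\vartheta$ and some $\vartheta_* > \vartheta_0$ chosen with room to spare, use (\ref{O11}) applied to $L^\Delta$ acting between the scale spaces to derive a bound of the form $\|u_s\|_{\vartheta(s)} \le \int_0^s \frac{\text{const}}{\vartheta(r) - \vartheta(s)}\|u_r\|_{\vartheta(r)}\,dr$ and iterate it; for $t_1$ small (below the analog of (\ref{T2})) this forces $u \equiv 0$ on $[0,t_1]$, and then one steps forward to cover all of $[0,\tau)$. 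I expect this uniqueness step to be the main obstacle: the differential equation is posed in the single space $\mathcal{K}_\vartheta$ where $L^\Delta_\vartheta$ is genuinely unbounded, so one cannot simply invoke Gronwall directly, and the Ovsyannikov-type scale estimate must be set up with the correct handling of the index-loss denominators so that the iterated integrals still sum. The existence part, by contrast, is essentially the Ovsyannikov iteration and should go through routinely once the constant-tracking in (\ref{O8})–(\ref{O11}) is pinned down.
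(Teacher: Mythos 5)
The central step of your proposal---iterating the full $L^\Delta$ between neighboring spaces of the scale and summing the resulting power series---cannot work as stated, because the operator norm of $L^\Delta_{\vartheta\vartheta''}$ has an \emph{order-two} pole in $\vartheta-\vartheta''$, not an order-one pole as your key display assumes. Look at the paper's own bound (\ref{O11}): besides the term $\propto(\vartheta-\vartheta'')^{-1}$ there is a term $\propto \bar{b}/(\vartheta-\vartheta'')^2$, coming from (\ref{O9}), where $A^\Delta$ and $B^{\Delta,\omega}$ pick up a factor $\bar{b}|\eta|^2$ because $\Psi_\omega(\eta)$ and the double sums in (\ref{As3}) are quadratic in $|\eta|$. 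Putting a uniform (or, by AM--GM, any) partition $\vartheta_0=\vartheta^0<\cdots<\vartheta^n=\vartheta$ of the gap and multiplying the bounds gives a factor at least $\big(n/(\vartheta-\vartheta_0)\big)^{2n}$; divided by $n!\sim (n/e)^n$ this still grows like $\big(Ctn/(\vartheta-\vartheta_0)^2\big)^n$, so the series diverges for every $t>0$. Your displayed estimate $\big(n\nu e^\vartheta/(\vartheta-\vartheta_0)\big)^n$ silently drops the $|\eta|^2$ contributions, which is exactly the contribution that kills a naive Ovsyannikov iteration here.

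What makes the problem solvable is that the ``bad'' quadratic growth sits entirely in $A^\Delta+B^{\Delta,\omega}$, and assumption (\ref{As4}) lets one show that this part (or rather its predual $\widehat{A}+\widehat{B}^\omega$) generates a \emph{substochastic} $C_0$-semigroup on $\mathcal{G}_\vartheta$ (Lemma \ref{T1lm}, via the Thieme--Voigt theorem and the sign estimate (\ref{CB7})). Its sun-dual on $\mathcal{K}_\vartheta$ is a contraction semigroup $S^\odot_\vartheta$ (Proposition \ref{1pn}, (\ref{ACa})), so the dangerous part of $L^\Delta$ contributes nothing to the index loss. The perturbation $C^{\Delta,\omega}+D^\Delta$ has only \emph{linear}-in-$|\eta|$ coefficients, see (\ref{O8}), hence only a first-order pole between scales, (\ref{T15}). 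The paper then builds $Q_{\vartheta_2\vartheta_1}(t)$ by the Duhamel-type expansion (\ref{KkN})--(\ref{Iwo}) interlacing $S$ with $C^{\Delta,\omega}+D^\Delta$; now the pole is of the right order and the iterated-integral factor $t^l/l!$ tames the $l^l$, giving convergence precisely for $t<T(\vartheta,\vartheta_0)$. Your proposal never uses (\ref{As4}) at all, which is a strong signal something is missing: that assumption is indispensable, and its role is exactly to permit the semigroup step you skipped. (Your uniqueness sketch, by contrast, is close to the paper's once the semigroup and the Duhamel formula (\ref{fra1}) are in hand; the issue is upstream, in the existence construction.)
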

The proof of this statement will be done in Section \ref{S4} below.
Let us now analyze how to choose $\vartheta$ in an optimal way.
Since the length $T(\vartheta, \vartheta_0)$ of the time interval in
Theorem \ref{1tm} depends on the choice of $\vartheta$, we take
$\vartheta= \vartheta_*$ defined by the condition
\[
T(\vartheta_*, \vartheta_0)= \max_{\vartheta > \vartheta_0}
T(\vartheta, \vartheta_0),
\]
which by (\ref{T2}) yields $\vartheta_* = 1 + \vartheta_0$. Hence,
the maximum length of the time interval is
\begin{equation*}
  %\label{T3}
\tau (\vartheta_0) = T(1+\vartheta_0 , \vartheta_0) =
\frac{e^{-\vartheta_0}}{e(\omega + 2 \langle b \rangle)}.
\end{equation*}
Note that $\tau (\vartheta_0) \to 0 $ as $\vartheta_0 \to + \infty$.

\subsection{Analyzing the assumptions}
Our main assumption in (\ref{As4}) looks like the stability
condition (with stability constant $\omega \geq0$) for the
interaction potential $\phi= \phi_{-} - \phi_{+}$, see (\ref{As2}),
used in the statistical mechanics of continuum systems of
interacting particles, cf. \cite[Chapter 3]{Ruelle} and also
\cite[Section 3.3]{KK}. A particular case of the kernels is where
they are translation invariant and $b$ has the following form
\begin{equation}
  \label{T4}
b(x,y|z) = \kappa_1 (x-z) + \kappa_2( y-z),
\end{equation}
where $\kappa_i (x) = \kappa_i(-x)\geq 0$ belong to
$L^1(\mathds{R}^d)$. Then
\begin{gather}
  \label{T5}
\phi_{+} (x,y) = (\alpha\ast \kappa_1)(x-y) + \kappa_2(x-y),
\\[.2cm]
\phi_{-} (x,y) = \kappa_1(x-y) + (\alpha\ast \kappa_2)(x-y),
\nonumber
\end{gather}
where $\alpha (x) = a(0, x)$ and $\ast$ denotes the usual
convolution. By (\ref{As1}) and (\ref{As2}) $\alpha$ and both
$\kappa_i$ are integrable. Thus, we can use their transforms
\begin{gather*}
\hat{\alpha}(p) =  \int_{\mathds{R}^d} \alpha (x) \exp\left(i
(p,x)\right)d x, \ \qquad p\in \mathds{R}^d, \\[.2cm]
\hat{\phi}_{\pm} (p) = \int_{\mathds{R}^d} \phi_{\pm} (0,x)
\exp\left(i
(p,x)\right)d x, \\[.2cm]
 \hat{\kappa}_i (p)= \int_{\mathds{R}^d}
\kappa_i (x) \exp\left(i (p,x)\right)d x,  \qquad i=0,1,
\end{gather*}
Note that the left-hand sides here are real. Moreover,
$\hat{\alpha}(p) \leq \hat{\alpha}(0) =1$ and $\hat{\kappa}_i(p)
\leq \hat{\kappa}_i(0)$, $i=1,2$.  Then a sufficient condition for
(\ref{As4}) to be satisfied, see \cite[Section 3.2, Proposition
3.2.7]{Ruelle},  is that the following holds: (a) both
$\phi_{\pm}(0,x)$ are continuous; (b) $\hat{\phi}_{-} (p) \geq
\hat{\phi}_{+} (p)$  for all $p\in \mathds{R}^d$. The latter means
that the potential $\phi = \phi_{-} - \phi_{+}$ is positive definite
(in Bochner's sense). In view of (\ref{T5}), (b) turns into
\begin{equation}
  \label{T6}
\forall p \in \mathds{R}^d \qquad \left(1- \hat{\alpha}(p)\right)
\left( \hat{\kappa}_1(p) - \hat{\kappa}_2(p) \right) \geq 0.
\end{equation}
Thus, a sufficient condition for the latter to hold is
$\hat{\kappa}_1(p) \geq \hat{\kappa}_2(p)$ for all those $p$ for
which $\hat{\alpha}(p) < 1$. An example can be
\begin{equation}
  \label{T6a}
\kappa_i (x) = \frac{1}{(2\pi)^{d/2}\sigma_i} \exp \left( -
\frac{|x|^2}{2\sigma_i^2}  \right), \qquad \sigma_1 < \sigma_2,
\end{equation}
cf. \cite[Proposition 3.8]{KK}.

\subsection{Comments}

First we make some comments on the result of Theorem \ref{1tm}. For
the model specified in (\ref{1L}) with a particular choice of $b$,
which we discuss below, in \cite[Theorem 2 and Proposition 1]{BKKut}
there was constructed a weak solution of the problem in (\ref{16})
on a bounded time interval. Our Theorem \ref{1tm} yields a solution
in the strongest sense -- a classical one -- see Definition
\ref{S1df}, existing, however, also on a bounded time interval. At
the same time, this solution $k_t$ yet may not be the correlation
function of a state. To prove this, one ought to develop a technique
similar to that used in \cite[Section 5]{KK} and based on the use of
Proposition \ref{Gpn}. Noteworthy, the fact, proved in \cite{KK},
that $k_t$ is a correlation function allowed there for continuing to
all $t>0$ the solution primarily obtained on a bounded time
interval. For jump dynamics with repulsion, such continuation was
realized in \cite{BK1,BK2}, also by means of the corresponding
property of $k_t$. However, for the model considered here for such a
continuation to be done proving that the solution $k_t$ is a
correlation function -- and hence is positive in a certain sense --
might not be enough. If this is the case, then the attraction in the
form as in (\ref{1L}) is not consistent with the sub-Poissonicity of
the states and hence essentially changes the dynamics of the model.
We plan to clarify this in our forthcoming work.

Now let us return to discussing the conditions imposed on the model.
As mentioned above, in \cite{BKKut} there was studied the model
specified in (\ref{1L}) with the choices of $b$  (cf. \cite[Eqs. (3)
-- (5)]{BKKut}) which in our notations can be presented as follows:
(i) $b(x,y|z) = \kappa(x-z)$; (ii) $b(x,y|z) = \kappa(y-z)$; (iii)
$b(x,y|z) = [\kappa(x-z) + \kappa(y-z)]/2$. Note that all the three
are particular cases of (\ref{T4}). However, instead od our
condition (\ref{As4}) there was imposed a stronger one,  which in
our context can be written down as
\begin{equation}
  \label{T7}
\forall x \in \mathds{R}^d \qquad \phi_{-} (0,x) \geq \phi_{+}
(0,x).
\end{equation}
In case (i), (\ref{T7}) turns into $\kappa (x) \geq (\alpha \ast
\kappa)(x)$, which is much stronger than $\hat{\kappa} (p) \geq 0$
that follows from (\ref{T6}) in this case. E.g., the latter clearly
holds for the Gaussian kernel $\kappa$, see (\ref{T6a}). In case
(ii), which corresponds to pure attraction, cf. Remark \ref{Asrk1},
(\ref{T7}) turns into  $\kappa (x) \leq (\alpha \ast \kappa)(x)$,
which, in fact, is equivalent to $\kappa =(\alpha \ast \kappa)$. The
latter can be considered as the problem of the existence of strictly
positive fixed points of the corresponding (positive) integral
operator in $L^1(\mathds{R}^d)$. In some cases, this problem has
such points, e.g., if the operator is compact -- by the Krein-Rutman
theorem. In the symmetric case (iii), we have $\phi_{+} = \phi_{-}$,
 and hence (\ref{As4}) trivially holds.

\section{The Proof}

\label{S4}

The main idea of proving Theorem \ref{1tm} is to construct the
family of bounded linear operators $Q_{\vartheta \vartheta_0}(t):
\mathcal{K}_{\vartheta_0} \to \mathcal{K}_\vartheta$ with $t \in [0,
T(\vartheta, \vartheta_0))$ such that the solution of (\ref{T1}) is
obtained in the form
\begin{equation}
  \label{T8}
k_t = Q_{\vartheta \vartheta_0}(t) k_0.
\end{equation}
An important element of this construction is another family of
bounded operators obtained by means of a substochastic semigroup
constructed in the $\mathcal{G}_\vartheta$. We obtain this semigroup
in the next subsection in Proposition \ref{1pn}.

\subsection{An auxiliary semigroup}

For a given $\vartheta \in \mathds{R}$, the formulas in (\ref{O9})
allows one to define the corresponding unbounded operators in
$\mathcal{K}_\vartheta$. The predual space of
$\mathcal{K}_\vartheta$ is $\mathcal{G}_\vartheta$ equipped with the
norm defined in (\ref{w1}). For $A^\Delta$ and $B^{\Delta, \omega}$,
see (\ref{O6}), we introduce $\widehat{A}$ and $\widehat{B}^{
\omega}$ by setting, cf. (\ref{9f}),
\[
\langle \! \langle \widehat{A} G, k \rangle \! \rangle = \langle \!
\langle G,  B^{\Delta, \omega} k \rangle \! \rangle , \qquad \langle
\! \langle \widehat{B}^{ \omega} G, k \rangle \! \rangle = \langle
\! \langle G, B^{\Delta, \omega} k \rangle \! \rangle.
\]
This yields
\begin{gather}
  \label{MB5}
(\widehat{A} G) (\eta)  =  \sum_{x\in \eta}\int_{\mathds{R}^d} a
(x,y) \left( 1 +
 \sum_{z\in \eta\setminus x} b(x,y|z)\right)
G(\eta\setminus x \cup y) dy ,\qquad \\[.2cm]
(\widehat{B}^{ \omega} G)(\eta) = - \Psi_\omega (\eta) G(\eta).
\nonumber
\end{gather}
Now for a given $\vartheta$, we set, cf. (\ref{C14})
\begin{equation}
  \label{CB14}
\widehat{\mathcal{D}}^\omega_\vartheta := \{ G \in
\mathcal{G}_\vartheta: \Psi_\omega G \in \mathcal{G}_\vartheta \}.
\end{equation}
Clearly, the multiplication operator $\widehat{B}^{
\omega}_\vartheta: \widehat{\mathcal{D}}^\omega_\vartheta\subset
\mathcal{G}_\vartheta \to \mathcal{G}_\vartheta$ defined in the
second line of (\ref{MB5}) is closed. Moreover, it generates a
$C_0$-semigroup $\{S^{(0)}_\vartheta (t)\}_{t\geq 0}$ of bounded
multiplication operators $(S^{(0)}_\vartheta (t) G)(\eta) = \exp( -
t \Psi_\omega (\eta) ) G(\eta)$. Note that each operator is a
positive contraction, i.e., $S^{(0)}_\vartheta(t)$ maps
$$\mathcal{G}^{+}_\vartheta := \{ G \in \mathcal{G}_\vartheta:
G(\eta ) \geq 0 , \ \lambda-{\rm a.a.} \ \eta \in \Gamma_0\}$$ into
itself and $|S^{(0)}_\vartheta (t) G |_\vartheta \leq
|G|_\vartheta$, see (\ref{w1}). That is, $\{S^{(0)}_\vartheta
(t)\}_{t\geq 0}$ is a \emph{substochastic} semigroup.

For $G \in \widehat{\mathcal{D}}^{\omega, +}_\vartheta :=
\widehat{\mathcal{D}}^\omega_\vartheta \cap
\mathcal{G}^{+}_\vartheta $, by  (\ref{As4}) and (\ref{MB5}) we have
\begin{eqnarray}
  \label{CB7}
|\widehat{A} G|_{\vartheta} & = & \int_{\Gamma_0} \left( \sum_{y\in
\eta} \int_{\mathds{R}^d} a(x,y) d x + \Phi_{+} (\eta)\right)
 G(\eta)
e^{\vartheta|\eta|} \lambda (d\eta) \\[.2cm]
& \leq & \int_{\Gamma_0} \left( \sum_{x\in \eta} \int_{\mathds{R}^d}
a(x,y) d y + \omega |\eta| + \Phi_{-} (\eta)\right) G(\eta)
e^{\vartheta|\eta|} \lambda (d\eta) \nonumber \\[.2cm]
& = & - \int_{\Gamma_0} \left( \widehat{B}^{\omega} G\right)(\eta)
e^{\vartheta|\eta|} \lambda (d\eta). \nonumber
\end{eqnarray}
Likewise, for $G \in \widehat{\mathcal{D}}^{\omega}_\vartheta$ we
get
\begin{equation}
  \label{T9}
|\widehat{A} G|_{\vartheta} \leq \int_{\Gamma_0} \Phi_\omega (\eta)
\left\vert G (\eta) \right\vert e^{\vartheta|\eta|} \lambda (d\eta),
\end{equation}
which means that $\widehat{A}$ can be defined on
$\widehat{\mathcal{D}}^{\omega}_\vartheta$, see (\ref{CB14}).
\begin{lemma}
  \label{T1lm}
The closure $\widehat{T}_\vartheta$  of $(\widehat{A} +
\widehat{B}^\omega, \widehat{\mathcal{D}}^{\omega}_\vartheta)$ in
$\mathcal{G}_\vartheta$ is the generator of a substochastic
semigroup.
\end{lemma}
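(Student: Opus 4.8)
The plan is to obtain $\widehat T_\vartheta$ from the perturbation theory of substochastic semigroups on $L^1$-spaces, in the form going back to Kato and Voigt and sharpened by Thieme and Voigt (see also Banasiak and Arlotti). The unperturbed generator is the multiplication operator $\widehat B^\omega_\vartheta$ with domain $\widehat{\mathcal D}^\omega_\vartheta$, which — as recorded right after \eqref{CB14} — generates the substochastic semigroup $S^{(0)}_\vartheta(t)$. The perturbation is $\widehat A$. First I would note the two structural facts that place us within the scope of that theory: (i) $\widehat A$ is \emph{positive}, i.e. it maps $\widehat{\mathcal D}^{\omega,+}_\vartheta$ into $\mathcal G^{+}_\vartheta$, which is immediate from the first line of \eqref{MB5} since $a\ge 0$ and $b\ge 0$; (ii) $\widehat{\mathcal D}^\omega_\vartheta \subseteq D(\widehat A)$, which is exactly what the bound \eqref{T9} (together with \eqref{BAst5} and \eqref{CB14}) gives, and in particular $\widehat{\mathcal D}^\omega_\vartheta$ is dense, being the domain of a $C_0$-semigroup generator.

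The decisive hypothesis is the \emph{sub-conservativity} (Kato) inequality
\[
\int_{\Gamma_0}\bigl[(\widehat A G)(\eta) + (\widehat B^\omega G)(\eta)\bigr]\, e^{\vartheta|\eta|}\,\lambda(d\eta)\le 0, \qquad 0\le G\in\widehat{\mathcal D}^\omega_\vartheta ,
\]
and this is precisely the content of the chain of relations in \eqref{CB7}: the transport term $\sum_{y\in\eta}\int a(x,y)\,dx$ carried by $|\widehat A G|_\vartheta$ and the one $\sum_{x\in\eta}\int a(x,y)\,dy$ hidden in $\Psi_\omega$ coincide by the symmetry of $a$ and \eqref{As1}, while the remaining interaction contribution is dominated via the stability bound \eqref{As4}, the leftover $\omega|\eta|+\Phi_{-}(\eta)-\Phi_{+}(\eta)\ge 0$ acting as an effective (state-dependent) killing rate. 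With (i), (ii) and this inequality in hand, the cited perturbation theorem produces a substochastic $C_0$-semigroup on $\mathcal G_\vartheta$ whose generator extends $(\widehat A+\widehat B^\omega,\widehat{\mathcal D}^\omega_\vartheta)$; alternatively one may build this semigroup directly as the monotone, strongly convergent limit of the semigroups obtained when $\widehat A$ is truncated to configurations with $|\eta|\le n$, each of which is a bounded perturbation of $S^{(0)}_\vartheta$.

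The point requiring care — and the one I expect to be the main obstacle — is the \emph{identification} of that generator with the closure of $\widehat A+\widehat B^\omega$ on the concrete domain $\widehat{\mathcal D}^\omega_\vartheta$, i.e. the assertion that $\widehat{\mathcal D}^\omega_\vartheta$ is a core: in general the Kato extension can be strictly larger than $\overline{\widehat A+\widehat B^\omega}$ precisely when mass escapes to configurations with infinitely many points. Here I would close this gap using the refinement of Thieme and Voigt: beyond sub-conservativity one checks their additional tightness-type condition on $\widehat A R(\lambda,\widehat B^\omega_\vartheta)$, which in the present setting follows from the quadratic bound $\Psi_\omega(\eta)\le c|\eta|^2$ in \eqref{BAst5} together with the fact that multiplication by $(\lambda+\Psi_\omega(\eta))^{-1}$ improves integrability against the weight $e^{\vartheta|\eta|}$. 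A second, more hands-on route is also available: $(\widehat A+\widehat B^\omega,\widehat{\mathcal D}^\omega_\vartheta)$ is readily seen to be dissipative in $\mathcal G_\vartheta$ — the $L^1$ sign computation uses positivity of $\widehat A$, that $-\Psi_\omega\le 0$, and the inequality \eqref{CB7} applied to $|G|$ — so it is closable with dissipative closure, and it then suffices to show that $\lambda I-(\widehat A+\widehat B^\omega)$ has dense range on $\widehat{\mathcal D}^\omega_\vartheta$ for some $\lambda>0$, after which the Lumer–Phillips theorem applied to the closure completes the argument.
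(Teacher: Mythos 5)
Your proposal follows essentially the same route as the paper: both invoke the Thieme--Voigt perturbation theorem, with the sub-conservativity inequality supplied by \eqref{CB7}, the inclusion $\widehat{\mathcal D}^\omega_\vartheta\subseteq D(\widehat A)$ by \eqref{T9}, and the quadratic bound \eqref{BAst5} used to close the ``core identification'' gap. You correctly identify that Kato sub-conservativity alone only yields a generator \emph{extending} $(\widehat A+\widehat B^\omega,\widehat{\mathcal D}^\omega_\vartheta)$, and that the real work is to show the closure is the full generator.

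The one place your articulation drifts from the paper is the precise form of the additional Thieme--Voigt hypothesis. You describe it as a resolvent/tightness condition on $\widehat A R(\lambda,\widehat B^\omega_\vartheta)$, whereas the paper's application of Theorem~2.7 of \cite{TV} is the two-space version: one embeds $\mathcal G_{\vartheta'}\hookrightarrow\mathcal G_\vartheta$ densely ($\vartheta'>\vartheta$), checks that the unperturbed semigroup and $\widehat A$ respect this scale, and then verifies the functional inequality
\[
\varphi_{\vartheta'}\bigl((\widehat A+\widehat B^\omega)G\bigr)\ \le\ \varphi_{\vartheta'}(G)\ -\ \varepsilon\,\varphi_\vartheta(\Psi_\omega G),\qquad G\in\widehat{\mathcal D}^{\omega,+}_{\vartheta'},
\]
which reduces, via \eqref{CB7}, to $\varphi_\vartheta(\Psi_\omega G)\le(1/\varepsilon)\varphi_{\vartheta'}(G)$; that in turn is exactly where \eqref{BAst5} and the exponential gap between the weights $e^{\vartheta|\eta|}$ and $e^{\vartheta'|\eta|}$ enter. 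So the idea you gesture at (``multiplication by $(\lambda+\Psi_\omega)^{-1}$ improves integrability against the weight'') is the right one, but the clean way to package it in this model is through the weighted-space inequality above rather than a resolvent estimate on a single space. Your alternative Lumer--Phillips route is not what the paper does and is left unfleshed, but the Thieme--Voigt route as you lay it out is correct and matches the paper's argument in all essentials.
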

\begin{proof}
We use the Thieme-Voigt perturbation technique \cite{TV}, see also
\cite[Section 3]{K}. For each $G \in \mathcal{G}_\vartheta^{+}$, we
have that
\begin{equation*}
  %\label{T10}
|G|_\vartheta = \varphi_\vartheta(G) := \int_{\Gamma_0} G(\eta)
e^{\vartheta|\eta|} \lambda (d\eta).
\end{equation*}
Clearly, $\varphi_\vartheta$ is a positive linear functional on
$\mathcal{G}_\vartheta$, and thus the norm defined in (\ref{w1}) is
additive on the cone $\mathcal{G}_\vartheta^{+}$. For $\vartheta' >
\vartheta$, by (\ref{w1}) $\mathcal{G}_{\vartheta'}$ is densely and
continuously embedded into $\mathcal{G}_\vartheta$. Moreover, the
mentioned above semigroup $\{S^{(0)}_\vartheta (t)\}_{t\geq 0}$ has
the property $S^{(0)}_\vartheta (t): \mathcal{G}_{\vartheta'} \to
\mathcal{G}_{\vartheta'}$, $t\geq 0$, and the restrictions
$S^{(0)}_\vartheta (t)|_{\mathcal{G}_{\vartheta'}}$ constitute a
$C_0$-semigroup, which is just $\{S^{(0)}_{\vartheta'} (t)\}_{t\geq
0}$ generated by $(\widehat{B}^\omega_{\vartheta'},
\widehat{\mathcal{D}}^{\omega}_{\vartheta'})$. By (\ref{T9}) we have
that $\widehat{A} : \widehat{\mathcal{D}}^{\omega}_{\vartheta'} \to
\mathcal{G}_{\vartheta'}$. Then according to \cite[Theorem 2.7]{TV},
see also \cite[Proposition 3.2]{K}, the proof will be done if we
show that, for some $\vartheta' > \vartheta$, the following holds
\begin{equation}
  \label{T11}
\forall  G\in \widehat{D}^{\omega,+}_{\vartheta'}\qquad
\varphi_{\vartheta'} ((\widehat{A} + \widehat{B}^\omega)G) \leq
\varphi_{\vartheta'} (G) - \varepsilon \varphi_\vartheta
(\Psi_\omega G)
\end{equation}
with some $\varepsilon>0$. Since (\ref{CB7}) holds for each
$\vartheta\in \mathds{R}$, we have that $$\varphi_{\vartheta'}
((\widehat{A} + \widehat{B}^\omega)G) \leq 0.$$ Then (\ref{T11})
turns into $\varphi_\vartheta (\Psi_\omega G) \leq (1/\varepsilon)
\varphi_{\vartheta'} (G)$. By (\ref{BAst5}) the latter holds for
each $\vartheta'>\vartheta$ and the correspondingly small
$\varepsilon$.
\end{proof}
Let $S_\vartheta:=\{S_\vartheta(t)\}_{t\geq 0}$ be the semigroup as
in Lemma \ref{T1lm}. The semigroup which we need is the sun-dual to
$S_\vartheta$. It is introduced as follows. Let $T^*_\vartheta$ be
the adjoint to the generator of $S_\vartheta$ with domain  ${\rm
Dom} (T^*_\vartheta) \subset\mathcal{K}_\vartheta$ defined in a
standard way. That is,
\[
{\rm Dom} (T^*_\vartheta) = \{ k \in \mathcal{K}_\vartheta: \exists
q \in \mathcal{K}_\vartheta \ \forall G \in
\widehat{\mathcal{D}}^\omega_\vartheta \ \langle \! \langle
\widehat{T}_\vartheta G, k \rangle \!\rangle = \langle \! \langle G,
q\rangle \!\rangle \}.
\]
For each $k\in {\rm Dom} (T^*_\vartheta)$, we have
that
\begin{equation}
  \label{T12}
(T^*_\vartheta k) (\eta) = (A^\Delta k)(\eta) + (B^{\Delta,\omega}
k)(\eta),
\end{equation}
see (\ref{O2}) and (\ref{O6}). By (\ref{O9}) we then get that
$\mathcal{K}_{\vartheta''} \subset {\rm Dom}(T^*_\vartheta)$ for
each $\vartheta'' < \vartheta$. Let $\mathcal{Q}_\vartheta$ stand
for the closure of ${\rm Dom}(T^*_\vartheta)$ in
$\mathcal{K}_\vartheta$. Then
\begin{equation}
 \label{z33}
\mathcal{Q}_\vartheta:= \overline{{\rm Dom}(T^*_\vartheta)}\supset
{\rm Dom}(T^*_\vartheta) \supset \mathcal{K}_{\vartheta''}.
\end{equation}
For each $t\geq 0$, the adjoint $(S_\vartheta(t))^*$ of
$S_\vartheta(t)$ is a bounded operator in $\mathcal{K}_\vartheta$.
However, the semigroup $\{(S_\vartheta(t))^*\}_{t\geq 0}$ is not
strongly continuous. For $t>0$, let $ S^{\odot}_\vartheta(t) $
denote the restriction of $(S_\vartheta(t))^*$ to
$\mathcal{Q}_\vartheta$. Since $S_\vartheta$ is the semigroup of
contractions, for $k\in \mathcal{Q}_\vartheta$ and all $t\geq 0$, we
have that
\begin{equation}
 \label{ACa}
\|S^{\odot}_\vartheta(t)k\|_\vartheta = \|S^{*}(t)k\|_\vartheta \leq
\|k\|_\vartheta.
\end{equation}
\begin{proposition}
 \label{1pn}
For every $\vartheta'' <\vartheta$ and any $k\in
\mathcal{K}_{\vartheta''}$, the map
\begin{equation*}
[0, +\infty) \ni t \mapsto S^{\odot}_\vartheta(t) k \in
\mathcal{K}_\vartheta
\end{equation*}
is continuous.
\end{proposition}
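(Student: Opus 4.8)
The plan is to use the general theory of sun-dual (adjoint) semigroups combined with the smoothing provided by the embedding $\mathcal{K}_{\vartheta''} \hookrightarrow \mathcal{K}_\vartheta$. First I would recall the standard fact that if $\{S_\vartheta(t)\}_{t\ge 0}$ is a $C_0$-semigroup on the (non-reflexive) Banach space $\mathcal{G}_\vartheta$, then on the subspace $\mathcal{Q}_\vartheta = \overline{{\rm Dom}(T^*_\vartheta)}$ the adjoint semigroup restricts to a $C_0$-semigroup $\{S^{\odot}_\vartheta(t)\}_{t\ge 0}$ (the sun-dual semigroup); see the classical treatment of Hille--Yosida theory for adjoint semigroups. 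Thus for every $k \in \mathcal{Q}_\vartheta$ the map $t \mapsto S^{\odot}_\vartheta(t)k$ is already continuous on $[0,+\infty)$ in the norm of $\mathcal{K}_\vartheta$. Since by (\ref{z33}) we have $\mathcal{K}_{\vartheta''}\subset {\rm Dom}(T^*_\vartheta)\subset \mathcal{Q}_\vartheta$ for any $\vartheta''<\vartheta$, this immediately gives the claimed continuity for $k\in \mathcal{K}_{\vartheta''}$.

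The only subtle point is justifying that $\{S^{\odot}_\vartheta(t)\}_{t\ge 0}$ is indeed strongly continuous on $\mathcal{Q}_\vartheta$, i.e. the identification of $\mathcal{Q}_\vartheta$ with the sun-dual space. The argument I would give is the following. Continuity at $t>0$ follows from the semigroup property together with continuity at $t=0$, using the uniform bound (\ref{ACa}): for $t,s\ge 0$ and $k\in \mathcal{Q}_\vartheta$, $\|S^{\odot}_\vartheta(t+s)k - S^{\odot}_\vartheta(t)k\|_\vartheta = \|S^{\odot}_\vartheta(t)(S^{\odot}_\vartheta(s)k - k)\|_\vartheta \le \|S^{\odot}_\vartheta(s)k - k\|_\vartheta$ (and similarly for the left-limit, writing $S^{\odot}_\vartheta(t) = S^{\odot}_\vartheta(t-s)S^{\odot}_\vartheta(s)$ for $s<t$), so it suffices to prove $\|S^{\odot}_\vartheta(s)k - k\|_\vartheta \to 0$ as $s\downarrow 0$. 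For this, I would first establish it on the dense set ${\rm Dom}(T^*_\vartheta)$: for $k\in {\rm Dom}(T^*_\vartheta)$ one has $S^{\odot}_\vartheta(s)k - k = \int_0^s S^{\odot}_\vartheta(r) T^*_\vartheta k\, dr$ (a standard identity for adjoint semigroups acting on domain elements, valid because the weak-$*$ integral lands in $\mathcal{Q}_\vartheta$), whence $\|S^{\odot}_\vartheta(s)k - k\|_\vartheta \le s\,\|T^*_\vartheta k\|_\vartheta \to 0$. Then the general case $k\in\mathcal{Q}_\vartheta$ follows by a standard $\varepsilon/3$ density argument using (\ref{ACa}) to control $\|S^{\odot}_\vartheta(s)(k-k_n)\|_\vartheta \le \|k - k_n\|_\vartheta$ uniformly in $s$.

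The main obstacle — and the place where a little care is required — is the integral identity $S^{\odot}_\vartheta(s)k - k = \int_0^s S^{\odot}_\vartheta(r) T^*_\vartheta k\, dr$ for $k\in {\rm Dom}(T^*_\vartheta)$: a priori the adjoint semigroup $\{(S_\vartheta(t))^*\}$ is only weak-$*$ continuous on all of $\mathcal{K}_\vartheta$, so the natural identity one gets by dualizing is $\langle\!\langle G, (S_\vartheta(s))^*k\rangle\!\rangle - \langle\!\langle G,k\rangle\!\rangle = \int_0^s \langle\!\langle \widehat{T}_\vartheta S_\vartheta(r)G, k\rangle\!\rangle\,dr = \int_0^s \langle\!\langle S_\vartheta(r)G, q\rangle\!\rangle\,dr$ for all $G\in \widehat{\mathcal{D}}^\omega_\vartheta$, where $q = T^*_\vartheta k$; one must then recognize the right-hand side as $\langle\!\langle G, \int_0^s (S_\vartheta(r))^* q\, dr\rangle\!\rangle$ and check that $r\mapsto (S_\vartheta(r))^*q$ is strongly (not merely weak-$*$) integrable into $\mathcal{Q}_\vartheta$ — which holds because $q\in\mathcal{K}_\vartheta$ is approximated in $\mathcal{Q}_\vartheta$ and each $(S_\vartheta(r))^*$ is a contraction, and because $\mathcal{Q}_\vartheta$ is closed. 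Once this is in place the rest is routine. Alternatively, I would simply invoke the general theorem (e.g. from the monograph literature on adjoint semigroups) that the restriction of the adjoint semigroup to the closure of the domain of the adjoint generator is always a $C_0$-semigroup, and the proof reduces to the one-line observation that $\mathcal{K}_{\vartheta''}\subset \mathcal{Q}_\vartheta$ by (\ref{z33}).
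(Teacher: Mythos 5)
Your concluding ``alternatively'' is precisely the paper's proof: the paper invokes Pazy's Theorem 10.4 (page 39) to assert that $\{S^{\odot}_\vartheta(t)\}_{t\geq 0}$ is a $C_0$-semigroup on $\mathcal{Q}_\vartheta = \overline{{\rm Dom}(T^*_\vartheta)}$, with generator the part of $T^*_\vartheta$ in $\mathcal{Q}_\vartheta$, and then the continuity follows immediately from the inclusion (\ref{z33}) $\mathcal{K}_{\vartheta''} \subset {\rm Dom}(T^*_\vartheta) \subset \mathcal{Q}_\vartheta$. So the proposal is correct and the approach matches.

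One remark on your detailed from-scratch argument (which in effect re-proves the cited theorem): the justification you give for the Bochner-integral identity is not quite right. You claim $r\mapsto (S_\vartheta(r))^*q$, with $q=T^*_\vartheta k$, is strongly integrable ``because $q$ is approximated in $\mathcal{Q}_\vartheta$'' --- but for a general $k\in{\rm Dom}(T^*_\vartheta)$ there is no reason for $q$ to lie in $\mathcal{Q}_\vartheta$ (indeed, the whole point of defining ${\rm Dom}(T^{\odot}_\vartheta)$ is to restrict to those $k$ for which $T^*_\vartheta k\in\mathcal{Q}_\vartheta$). Fortunately this step is unnecessary: staying at the level of the weak-$*$ identity $\langle\!\langle G, (S_\vartheta(s))^*k - k\rangle\!\rangle = \int_0^s \langle\!\langle S_\vartheta(r)G, q\rangle\!\rangle\,dr$ and taking the supremum over $|G|_\vartheta\leq 1$ already gives $\|(S_\vartheta(s))^*k - k\|_\vartheta \leq s\,\|q\|_\vartheta$ (using the contraction bound (\ref{ACa})), which is all you need, without ever producing a Bochner integral or locating its value in $\mathcal{Q}_\vartheta$.
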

\begin{proof}
By \cite[Theorem 10.4, page 39]{Pazy}, the collection
$S^{\odot}_\vartheta:=\{S^{\odot}_\vartheta(t)\}_{t\geq 0}$
constitutes a $C_0$-semigroup on $\mathcal{Q}_\vartheta$ the
generator of which, $T^{\odot}_\vartheta$, is the part of
$T^*_\vartheta$ in $\mathcal{Q}_\vartheta$. That is,
$T^{\odot}_\vartheta$ is the restriction of $T^*_\vartheta$ to the
set
\begin{equation*}
 %\label{z34a}
{\rm Dom} (T^{\odot}_\vartheta):= \{ k\in {\rm Dom}(T^*_\vartheta):
T^*_\vartheta k \in \mathcal{Q}_\vartheta\},
\end{equation*}
cf. \cite[Definition 10.3, page 39]{Pazy}. The continuity in
question follows by the $C_0$-property of the semigroup
$\{S^{\odot}_\vartheta(t)\}_{t\geq 0}$ and (\ref{z33}).
\end{proof}
 By (\ref{O9}) it follows that ${\rm Dom} (T^{\odot}_{\vartheta'}) \supset
\mathcal{K}_{\vartheta''}$, holding  for each $\vartheta'' <
\vartheta'$. Hence, see  \cite[Theorem 2.4, page 4]{Pazy},
\begin{equation*}
  %\label{AEa}
S^{\odot}_{\vartheta'} (t) k \in {\rm Dom} (T^{\odot}_{\vartheta'}),
\end{equation*}
and
\begin{equation}
  \label{AEmar}
\frac{d}{dt} S^{\odot}_{\vartheta'} (t) k = A^{\odot}_{\vartheta'}
S^{\odot}_{\vartheta'} (t) k,
\end{equation}
which holds for all $\vartheta' \in (\vartheta'', \vartheta]$ and
$k\in \mathcal{K}_{\vartheta''}$.

\subsection{Getting the solutions}
Here we construct the family of the operators $Q_{\vartheta
\vartheta_0}(t)$ which appear in (\ref{T8}). To this end we use the
semigroup as in Proposition \ref{1pn}.

For $\vartheta'' < \vartheta$, let
$\mathcal{L}(\mathcal{K}_{\vartheta''}, \mathcal{K}_{\vartheta})$
denote the Banach space of bounded linear operators acting from
$\mathcal{K}_{\vartheta''}$ to $\mathcal{K}_{\vartheta}$. By means
of the estimates in (\ref{O9}) one can introduce
$A^\Delta_{\vartheta \vartheta''}$ and $B^{\Delta,\omega}_{\vartheta
\vartheta''}$, both in $\mathcal{L}(\mathcal{K}_{\vartheta''},
\mathcal{K}_{\vartheta})$. Then, cf. (\ref{24a}) and (\ref{T12}),
\begin{equation}
  \label{T13}
\forall k \in \mathcal{K}_{\vartheta''} \qquad T^{\odot}_\vartheta k
= \left( A^\Delta_{\vartheta \vartheta''} +
B^{\Delta,\omega}_{\vartheta \vartheta''}\right)k.
\end{equation}
Let now $S_{\vartheta \vartheta''}(t)$, $t>0$ be the restriction of
$S^{\odot}_{\vartheta} (t)$ to $\mathcal{K}_{\vartheta''}$. Let also
$S_{\vartheta \vartheta''}(0)$ be the embedding
$\mathcal{K}_{\vartheta''} \hookrightarrow \mathcal{K}_{\vartheta}$.
By (\ref{ACa}) we have that the operator norm of such operators
satisfy
\begin{equation}
  \label{T122}
\forall t\geq 0 \qquad \| S_{\vartheta \vartheta''}(t) \| \leq 1.
\end{equation}
By Proposition \ref{1pn} the map
$$[0,+\infty) \ni t \mapsto S_{\vartheta \vartheta''}(t) \in
\mathcal{L}(\mathcal{K}_{\vartheta''}, \mathcal{K}_{\vartheta})$$ is
continuous, and for each $\vartheta' \in (\vartheta'', \vartheta)$,
the following holds, see (\ref{AEmar}) and (\ref{T13}),
\begin{equation}
  \label{T14}
\frac{d}{dt} S_{\vartheta \vartheta''}(t) = \left(
A^\Delta_{\vartheta \vartheta'} + B^{\Delta,\omega}_{\vartheta
\vartheta'}\right) S_{\vartheta' \vartheta''}(t).
\end{equation}
Now by means of the estimates in (\ref{O8}) one concludes that the
formulas in (\ref{O4}) and (\ref{O6}) can be used to introduce
$C^{\Delta,\omega}_{\vartheta \vartheta''}$ and
$D^{\Delta}_{\vartheta \vartheta''}$, both in
$\mathcal{L}(\mathcal{K}_{\vartheta''}, \mathcal{K}_{\vartheta})$.
Their operator norms satisfy, cf. (\ref{O11}),
\begin{equation}
  \label{T15}
\|C^{\Delta,\omega}_{\vartheta \vartheta''}\| \leq \frac{(\omega
+\langle b \rangle)e^\vartheta}{e(\vartheta - \vartheta'')}, \qquad
\|D^{\Delta}_{\vartheta \vartheta''}\| \leq \frac{\langle b \rangle
e^\vartheta}{e(\vartheta - \vartheta'')}.
\end{equation}
Let $\vartheta_0$ be as in (\ref{T1}). Take $\vartheta>\vartheta_0$
and then define
\begin{equation*}
  %\label{T16}
\mathcal{A}(\vartheta, \vartheta_0) = \{ (\vartheta_1 , \vartheta_2,
t): \vartheta_0 \leq \vartheta_1 < \vartheta_2 \leq \vartheta, \ \
0\leq t < T(\vartheta_2, \vartheta_1) \},
\end{equation*}
where $T(\vartheta_2, \vartheta_1)$ is as in (\ref{T2}).
\begin{lemma}
  \label{T2lm}
For any $(\vartheta_1 , \vartheta_2, t) \in \mathcal{A}(\vartheta,
\vartheta_0)$, there exists $Q_{\vartheta_2 \vartheta_1}(t) \in
\mathcal{L}(\mathcal{K}_{\vartheta_1}, \mathcal{K}_{\vartheta_2})$
such that the family $\{Q_{\vartheta_2 \vartheta_1}(t): (\vartheta_1
, \vartheta_2, t) \in \mathcal{A}(\vartheta, \vartheta_0) \}$ has
the following properties:
\begin{itemize}
    \item[{\it (i)}] the map $[0, T(\vartheta_2, \vartheta_1)) \ni t \mapsto  Q_{\vartheta_2
  \vartheta_1}(t) \in \mathcal{L}(\mathcal{K}_{\vartheta_1}, \mathcal{K}_{\vartheta_2})$ is continuous;
\item[{\it (ii)}] the operator norm of $Q_{\vartheta_2
  \vartheta_1}(t)$ satisfies
\begin{equation}
  \label{Herf}
  \|Q_{\vartheta_2
  \vartheta_1}(t)\| \leq \frac{T(\vartheta_2, \vartheta_1)}{T(\vartheta_2, \vartheta_1)
  -t};
\end{equation}
\item[{\it (iii)}] for each $\vartheta_3 \in (\vartheta_1 , \vartheta_2)$ and $t< T(\vartheta_3, \vartheta_1)$,
the following holds
\end{itemize}
\begin{equation}
  \label{bed7}
\frac{d}{dt} Q_{\vartheta_2
  \vartheta_1}(t) = L^\Delta_{\vartheta_2\vartheta_3} Q_{\vartheta_3
  \vartheta_1}(t),
\end{equation}
where $L^\Delta_{\vartheta_2\vartheta_3}$ is as in (\ref{24a}).
\end{lemma}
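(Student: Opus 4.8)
The plan is to build $Q_{\vartheta_2\vartheta_1}(t)$ as a norm-convergent series obtained by iterating the Duhamel/variation-of-constants formula around the semigroup $S_{\vartheta_2\vartheta_1}(t)$ from Proposition \ref{1pn}, treating $C^{\Delta,\omega}+D^\Delta$ as the perturbation of $A^\Delta+B^{\Delta,\omega}$. Concretely, I would first fix a decreasing partition of the interval between any two indices: for a given pair $\vartheta_1<\vartheta_2$ choose intermediate points and define the $n$-th term of the series by an $n$-fold time-ordered integral,
\begin{equation*}
Q^{(n)}_{\vartheta_2\vartheta_1}(t) = \int_0^t\!\!\int_0^{t_1}\!\!\cdots\!\!\int_0^{t_{n-1}} S_{\vartheta_2\delta_1}(t-t_1)\,\Xi_{\delta_1\delta_1'}\,S_{\delta_1'\delta_2}(t_1-t_2)\,\Xi_{\delta_2\delta_2'}\cdots\,S_{\delta_n'\vartheta_1}(t_n)\,dt_n\cdots dt_1,
\end{equation*}
where $\Xi = C^{\Delta,\omega}_{\cdot\cdot}+D^\Delta_{\cdot\cdot}$ and each application of $\Xi$ ``costs'' a small drop in the index $\vartheta$. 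The operator $Q_{\vartheta_2\vartheta_1}(t)$ is then $\sum_{n\ge 0} Q^{(n)}_{\vartheta_2\vartheta_1}(t)$.

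The key estimate is the convergence of this series on $[0,T(\vartheta_2,\vartheta_1))$. Here I would combine the contraction bound $\|S_{\cdot\cdot}(t)\|\le 1$ from (\ref{T122}) with the norm bounds (\ref{T15}) on $C^{\Delta,\omega}_{\vartheta\vartheta''}$ and $D^\Delta_{\vartheta\vartheta''}$, noting that their sum is controlled by $(\omega+2\langle b\rangle)e^{\vartheta}/\bigl(e(\vartheta-\vartheta'')\bigr)$, which is exactly the reciprocal of $e\,T(\vartheta,\vartheta'')$ from (\ref{T2}). Splitting each of the $n$ index-drops into equal pieces of size $(\vartheta_2-\vartheta_1)/n$ and using the elementary inequality $n^n/n!\le e^n$ together with the time-simplex volume $t^n/n!$, the $n$-th term is bounded by $\bigl(t/T(\vartheta_2,\vartheta_1)\bigr)^n$; summing the geometric series gives precisely the bound (\ref{Herf}) in item (ii), and simultaneously establishes absolute convergence in $\mathcal{L}(\mathcal{K}_{\vartheta_1},\mathcal{K}_{\vartheta_2})$ uniformly on compact subintervals. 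Continuity in $t$, item (i), follows because each $Q^{(n)}$ is continuous (a finite iterated integral of the continuous, by Proposition \ref{1pn}, integrand composed with bounded operators) and the series converges locally uniformly.

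For the differential equation (\ref{bed7}) in item (iii), I would differentiate the series term by term. Differentiating the outermost integral in $Q^{(n+1)}_{\vartheta_2\vartheta_1}(t)$ produces one contribution from the upper limit, namely $\bigl(C^{\Delta,\omega}_{\vartheta_2\vartheta_3}+D^\Delta_{\vartheta_2\vartheta_3}\bigr)Q^{(n)}_{\vartheta_3\vartheta_1}(t)$, and one from differentiating $S_{\vartheta_2\delta_1}(t-t_1)$ under the integral, which by (\ref{T14}) yields $\bigl(A^\Delta_{\vartheta_2\vartheta_3}+B^{\Delta,\omega}_{\vartheta_2\vartheta_3}\bigr)$ acting on the remaining integral, i.e.\ on $Q^{(n+1)}_{\vartheta_3\vartheta_1}(t)$ after re-indexing; the $n=0$ term simply reproduces $\bigl(A^\Delta_{\vartheta_2\vartheta_3}+B^{\Delta,\omega}_{\vartheta_2\vartheta_3}\bigr)S_{\vartheta_3\vartheta_1}(t)$. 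Summing over $n$ and recombining $A^\Delta+B^{\Delta,\omega}+C^{\Delta,\omega}+D^\Delta = L^\Delta$ via the decomposition (\ref{O7}) and the identification (\ref{24a}) gives (\ref{bed7}). The term-by-term differentiation is justified, again, by local-uniform convergence of both the series and the differentiated series, the latter by the same geometric-series argument with one extra factor.

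The main obstacle I anticipate is purely bookkeeping: one must choose the intermediate indices $\delta_j,\delta_j'$ consistently so that every occurrence of $S$ and of $\Xi$ lands in a space where the relevant bound applies, and so that the $n$ index-drops telescope to the total $\vartheta_2-\vartheta_1$; getting the combinatorial factor to come out as $t^n/T(\vartheta_2,\vartheta_1)^n$ rather than something weaker requires the equal-partition trick plus the $n^n/n!\le e^n$ estimate, and care is needed because the constant in (\ref{T2}) carries the factor $e^{-\vartheta}$, so the partition must be arranged to put the worst index $\vartheta_2$ in every factor. Once the estimate is set up correctly, the rest is the standard Picard-type argument.
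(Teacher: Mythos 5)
Your proposal is correct and follows essentially the same route as the paper: an iterated Duhamel series built from the sun-dual semigroup $S_{\vartheta\vartheta''}$ of Proposition \ref{1pn} with $C^{\Delta,\omega}+D^\Delta$ as the perturbation, an index-partition that parcels out drops of $\vartheta$ to each $C^{\Delta,\omega}+D^\Delta$ factor so that (\ref{T15}) together with $\|S\|\le 1$ and the $l^l/l!\le e^l$ estimate yields a geometric bound $(t/T(\vartheta_2,\vartheta_1))^l$, and term-by-term differentiation (using (\ref{T14}) for the $S$ factor and the upper-limit contribution for the new $C^{\Delta,\omega}+D^\Delta$ factor) to obtain (\ref{bed7}) via the decomposition (\ref{O7}) and (\ref{24a}). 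The only cosmetic difference is that the paper's partition (\ref{z40}) explicitly reserves a separate thin slice $\delta/(l+1)$ for each $S$-link and a slice $\epsilon$ for each $C^{\Delta,\omega}+D^\Delta$-link, whereas you describe a nearly equal split; your observation that one must place the worst index $\vartheta_2$ in each exponential factor is exactly how the paper bounds $e^{\vartheta^{2s}}\le e^{\vartheta_2}$ to reach (\ref{Iwv}).
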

The proof of this statement employs the following construction. For
$l\in \mathbb{N}$ and $t>0$, we set
\begin{equation*}
 %\label{cone}
\mathcal{T}_l :=\{ (t,t_1, \dots, t_l) : 0\leq t_l \leq \cdots \leq
t_1 \leq t\},
\end{equation*}
fix some $\theta \in (\vartheta_1 , \vartheta_2]$, and then take
$\delta < \theta - \vartheta_1$. Next we divide the interval
$[\vartheta_1,\theta]$ into subintervals with endpoints
$\vartheta^s$, $s=0, \dots, 2l +1$, as follows. Set $\vartheta^0
=\vartheta_1$, $\vartheta^{2l+1 } = \theta$, and
\begin{eqnarray}
\label{z40} \vartheta^{2s} & = & \vartheta_1 +\frac{s}{l+1}\delta +
s \epsilon, \qquad
\epsilon = (\theta - \vartheta_1 - \delta)/l, \\[.2cm]
\vartheta^{2s+1} & = & \vartheta_1+ \frac{s+1}{l+1}\delta + s
\epsilon, \qquad s= 0,1, \dots, l. \nonumber
\end{eqnarray}
Then for $(t,t_1, \dots, t_l) \in \mathcal{T}_l$, define
\begin{gather}
  \label{Iwo}
\Pi_{\theta \vartheta_1}^{(l)}(t,t_1, \dots , t_l) = S_{\theta
\vartheta^{2l}}(t-t_1)
\left(C^{\Delta,\omega}_{\vartheta^{2l}\vartheta^{2l-1} }+
D^{\Delta}_{\vartheta^{2l}\vartheta^{2l-1} } \right) \times \\[.2cm]
\times \cdots \times  S_{\vartheta^{2s+1}
\vartheta^{2s}}(t_{l-s}-t_{l-s+1})
\left(C^{\Delta,\omega}_{\vartheta^{2s}\vartheta^{2s-1} }+
D^{\Delta}_{\vartheta^{2s}\vartheta^{2s-1} } \right) \nonumber
\times \\[.2cm]
\times \cdots \times S_{\vartheta^{3} \vartheta^{2}}(t_{l-1}-t_{l})
\left(C^{\Delta,\omega}_{\vartheta^{2}\vartheta^{1} }+
D^{\Delta}_{\vartheta^{2}\vartheta^{21} } \right)S_{\vartheta^{1}
\vartheta_1}(t_{l}).  \nonumber
\end{gather}
\begin{proposition}
  \label{N1pn}
For each $l\in \mathbb{N}$, the operators defined in (\ref{Iwo})
have the following properties:
\begin{itemize}
\item[{\it (i)}] for each $(t, t_1 , \dots , t_l) \in \mathcal{T}_l$,
$\Pi_{\theta\vartheta_1}^{(l)}(t,t_1, \dots , t_l)\in
\mathcal{L}(\mathcal{K}_{\vartheta_1}, \mathcal{K}_{\theta})$, and
the map
\[
\mathcal{T}_l \ni (t,t_1, \dots, t_l) \mapsto
\Pi_{\theta\vartheta_1}^{(l)} (t,t_1, \dots, t_l) \in
\mathcal{L}(\mathcal{K}_{\vartheta_1}, \mathcal{K}_{\theta})
\]
is continuous;
\item[{\it (ii)}] for fixed $t_1,t_2, \dots , t_l$, and
each $\varepsilon>0$, the map
\[
(t_1, t_1 + \varepsilon) \ni t \mapsto
\Pi_{\theta\vartheta_1}^{(l)}(t,t_1, \dots , t_l) \in
\mathcal{L}(\mathcal{K}_{\vartheta_1}, \mathcal{K}_{\vartheta_2})
\]
is continuously differentiable and for each $\vartheta' \in
(\vartheta_1, \theta)$ the following holds
\end{itemize}
\begin{equation}
  \label{N1p}
\frac{d}{dt} \Pi_{\theta\vartheta_1}^{(l)} (t,t_1, \dots, t_l) =
\left(A^\Delta_{\theta \vartheta'} + B^{\Delta, \omega}_{\theta
\vartheta'}\right) \Pi_{\vartheta'\vartheta_1}^{(l)} (t,t_1, \dots,
t_l).
\end{equation}
\end{proposition}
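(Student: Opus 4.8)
The plan is to read off all three claims from the structure of $\Pi^{(l)}_{\theta\vartheta_1}(t,t_1,\dots,t_l)$ as a finite product of $2l+1$ operators, each already under control. The factors come in two types. First, the semigroup operators $S_{\vartheta^{2s+1}\vartheta^{2s}}(\cdot)$, $s=0,\dots,l$ (with $\vartheta^0=\vartheta_1$ and $\vartheta^{2l+1}=\theta$), for which we have the contraction bound (\ref{T122}), the norm-continuity of $t\mapsto S_{\vartheta\vartheta''}(t)\in\mathcal{L}(\mathcal{K}_{\vartheta''},\mathcal{K}_\vartheta)$ recorded after Proposition \ref{1pn}, and the differentiation formula (\ref{T14}). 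Second, the fixed operators $C^{\Delta,\omega}_{\vartheta^{2s}\vartheta^{2s-1}}+D^{\Delta}_{\vartheta^{2s}\vartheta^{2s-1}}\in\mathcal{L}(\mathcal{K}_{\vartheta^{2s-1}},\mathcal{K}_{\vartheta^{2s}})$, $s=1,\dots,l$, whose norms, by (\ref{T15}) together with $\vartheta^{2s}-\vartheta^{2s-1}=\epsilon$ and $\vartheta^{2s}\le\theta$ from (\ref{z40}), do not exceed $(\omega+2\langle b\rangle)e^{\theta}/(e\epsilon)$.

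For part~(i): the scales in (\ref{z40}) are arranged precisely so that the factors act between matching spaces, $S_{\vartheta^{1}\vartheta_1}\colon\mathcal{K}_{\vartheta_1}\to\mathcal{K}_{\vartheta^1}$, then $C^{\Delta,\omega}_{\vartheta^{2}\vartheta^{1}}+D^{\Delta}_{\vartheta^{2}\vartheta^{1}}\colon\mathcal{K}_{\vartheta^1}\to\mathcal{K}_{\vartheta^2}$, and so on up to $S_{\theta\vartheta^{2l}}\colon\mathcal{K}_{\vartheta^{2l}}\to\mathcal{K}_{\theta}$; hence the composition lies in $\mathcal{L}(\mathcal{K}_{\vartheta_1},\mathcal{K}_\theta)$ and
\begin{equation*}
\bigl\|\Pi^{(l)}_{\theta\vartheta_1}(t,t_1,\dots,t_l)\bigr\|\ \le\ \Bigl(\tfrac{(\omega+2\langle b\rangle)e^{\theta}}{e\epsilon}\Bigr)^{l}.
\end{equation*}
Continuity on $\mathcal{T}_l$ follows since operator composition $\mathcal{L}(X,Y)\times\mathcal{L}(Y,Z)\ni(U,V)\mapsto VU\in\mathcal{L}(X,Z)$ is jointly norm-continuous, the $C^{\Delta,\omega}+D^{\Delta}$ factors are constant, each semigroup factor is norm-continuous on $[0,+\infty)$ and equals the relevant embedding at argument $0$ (the correct limit, by the $C_0$-property), and the time arguments in (\ref{Iwo}) are affine functions of $(t,t_1,\dots,t_l)\in\mathcal{T}_l$.

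For part~(ii): fix $t_1,\dots,t_l$; in (\ref{Iwo}) only the leftmost factor $S_{\theta\vartheta^{2l}}(t-t_1)$ depends on $t$. Writing $\Pi^{(l)}_{\theta\vartheta_1}(t,t_1,\dots,t_l)=S_{\theta\vartheta^{2l}}(t-t_1)\,R$ with the $t$-independent $R:=\bigl(C^{\Delta,\omega}_{\vartheta^{2l}\vartheta^{2l-1}}+D^{\Delta}_{\vartheta^{2l}\vartheta^{2l-1}}\bigr)S_{\vartheta^{2l-1}\vartheta^{2l-2}}(t_1-t_2)\cdots S_{\vartheta^1\vartheta_1}(t_l)\in\mathcal{L}(\mathcal{K}_{\vartheta_1},\mathcal{K}_{\vartheta^{2l}})$, and using that by (\ref{T14}) the map $t\mapsto S_{\theta\vartheta^{2l}}(t-t_1)$ is continuously differentiable in $\mathcal{L}(\mathcal{K}_{\vartheta^{2l}},\mathcal{K}_\theta)$ for $t>t_1$ with derivative $\bigl(A^\Delta_{\theta\vartheta'}+B^{\Delta,\omega}_{\theta\vartheta'}\bigr)S_{\vartheta'\vartheta^{2l}}(t-t_1)$ for any $\vartheta'\in(\vartheta^{2l},\theta)$, the product rule (one differentiable factor, the rest a fixed bounded operator) gives
\begin{equation*}
\frac{d}{dt}\Pi^{(l)}_{\theta\vartheta_1}(t,t_1,\dots,t_l)=\bigl(A^\Delta_{\theta\vartheta'}+B^{\Delta,\omega}_{\theta\vartheta'}\bigr)\,S_{\vartheta'\vartheta^{2l}}(t-t_1)\,R .
\end{equation*}
Here $S_{\vartheta'\vartheta^{2l}}(t-t_1)\,R$ is exactly the operator produced by (\ref{Iwo}) when the top scale $\theta$ is replaced by $\vartheta'$ and the partition $\vartheta_1=\vartheta^0<\vartheta^1<\dots<\vartheta^{2l}<\vartheta'$ is retained, i.e.\ it is $\Pi^{(l)}_{\vartheta'\vartheta_1}(t,t_1,\dots,t_l)$; this is (\ref{N1p}). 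Continuity of the derivative in $t$ then follows from part~(i) applied with top scale $\vartheta'$, composed with the bounded operator $A^\Delta_{\theta\vartheta'}+B^{\Delta,\omega}_{\theta\vartheta'}$.

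The one point needing care is the range of $\vartheta'$: the argument above delivers (\ref{N1p}) for $\vartheta'$ lying between the top interior partition point $\vartheta^{2l}$ and $\theta$. To reach every $\vartheta'\in(\vartheta_1,\theta)$ one invokes the fact that $\Pi^{(l)}_{\theta\vartheta_1}(t,t_1,\dots,t_l)$ does not depend on which monotone partition of $[\vartheta_1,\theta]$ is used in its definition — all factors being consistent restrictions of the scale-independent operators $C^{\Delta,\omega}+D^{\Delta}$, cf.\ (\ref{24a}), and of the sun-dual semigroups, which coincide on common subspaces by virtue of the corresponding property of the multiplication semigroups $\{S^{(0)}_\vartheta(t)\}$ established in the proof of Lemma \ref{T1lm} — so that one may re-run the construction with the interior points crowded just above $\vartheta_1$ and $\vartheta'$ taken between the highest of them and $\theta$. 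I expect this partition-independence to be the principal, though routine, obstacle; the remainder is bookkeeping with (\ref{T122}), (\ref{T15}) and (\ref{T14}).
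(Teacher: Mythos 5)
Your proof is correct and follows the same route as the paper's (which dispatches both claims in one sentence each): boundedness and continuity in part~(i) from (\ref{T122}), (\ref{T15}) and the norm-continuity of $t\mapsto S_{\vartheta\vartheta''}(t)$ recorded after Proposition~\ref{1pn}; differentiation in part~(ii) from (\ref{T14}) and the product rule, the only $t$-dependent factor being $S_{\theta\vartheta^{2l}}(t-t_1)$.

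The subtlety you flag at the end is genuine and is the one point the paper's one-sentence argument does not address: applying (\ref{T14}) to the leftmost factor only delivers (\ref{N1p}) for $\vartheta'\in(\vartheta^{2l},\theta)$, whereas the proposition asserts it for every $\vartheta'\in(\vartheta_1,\theta)$. Your resolution — that the right-hand side of (\ref{N1p}) is in fact $\vartheta'$-independent because the factors $S_{\cdot\cdot}(t)$, $C^{\Delta,\omega}_{\cdot\cdot}+D^\Delta_{\cdot\cdot}$ and $A^\Delta_{\cdot\cdot}+B^{\Delta,\omega}_{\cdot\cdot}$ at different levels of the scale are consistent restrictions of each other (the analogue of (\ref{24a}), and for the sun-dual semigroup the nesting built into its construction) — is sound. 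It could be phrased slightly more directly: for a fixed $k\in\mathcal{K}_{\vartheta_1}$ the element $(A^\Delta_{\theta\vartheta'}+B^{\Delta,\omega}_{\theta\vartheta'})\Pi^{(l)}_{\vartheta'\vartheta_1}(t,\ldots)k$ of $\mathcal{K}_\theta$ is the same $\Gamma_0$-function for every admissible $\vartheta'$, so once the identity holds for one $\vartheta'$ near $\theta$ it holds for all of them, without re-running the construction with a new partition. Either way the gap is closed; the remainder of your argument is exactly the paper's.
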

\begin{proof}
The first part of claim {\it (i)} follows by (\ref{Iwo}),
(\ref{T122}), and (\ref{T15}). To prove the second part we apply
Proposition \ref{1pn} and (\ref{T14}), which yields (\ref{N1p}).
\end{proof}
 \textit{Proof of Lemma \ref{T2lm}.}
Take any $T< T(\vartheta_2, \vartheta_1)$ and then pick $\theta \in
(\vartheta_1, \vartheta_2]$ and a positive $\delta< \theta -
\vartheta_1$ such that
\[
T < T_\delta := \frac{(\theta - \vartheta_1 -
\delta)e^{-\vartheta_2}}{\omega + 2\langle b \rangle}.
\]
For this $\delta$, take $\Pi^{(l)}_{\theta\vartheta_1}$ as in
(\ref{Iwo}), and then set
\begin{eqnarray}
  \label{KkN}
& & Q_{\theta\vartheta_1}^{(n)} (t) = S_{\theta\vartheta_1}(t) \\[.2cm] & & \quad +
\sum_{l=1}^n \int_0^t\int_0^{t_1} \cdots
\int_0^{t_{l-1}}\Pi_{\theta\vartheta_1}^{(l)}(t,t_1, \dots , t_l) d
t_l \cdots dt_1, \quad n\in \mathbb{N}. \nonumber
\end{eqnarray}
By  (\ref{T122}), (\ref{T15}), and (\ref{z40}) the operator norm of
(\ref{Iwo}) satisfies
\begin{equation}
  \label{Iwv}
\| \Pi_{\theta\vartheta_1}^{(l)} (t,t_1, \dots, t_l;\mathbb{B})\|
\leq \left(\frac{l}{eT_\delta}\right)^l,
\end{equation}
holding for all $l=1, \dots , n$. This yields in (\ref{KkN})
\begin{equation*}
 % \label{Iv2n}
\|Q_{\theta\vartheta_1}^{(n)} (t) -
Q_{\theta\vartheta_1}^{(n-1)}(t)\| \leq \frac{1}{n!} \left(
\frac{n}{e}\right)^n \left( \frac{T}{T_\delta}\right)^n,
\end{equation*}
which implies $$\forall t\in [0,T] \quad Q_{\theta\vartheta_1}^{(n)}
(t) \to Q_{\theta\vartheta_1}(t)\in
\mathcal{L}(\mathcal{K}_{\vartheta_1},\mathcal{K}_{\theta}), \ \
{\rm as} \ \ n\to +\infty.$$ This proves claim (i). The estimate in
(\ref{Herf}) follows from that in (\ref{Iwv}). Now by (\ref{Iwo}),
(\ref{T14}), and (\ref{N1p}) we obtain
\begin{equation*}
  %\label{norma3}
\frac{d}{dt} Q_{\vartheta_2\vartheta_1}^{(n)} (t) =
\left(A^\Delta_{\vartheta_2\theta} + B^{\Delta,
\omega}_{\vartheta_2\theta} \right) Q_{\theta\vartheta_1}^{(n)} (t)
+ \left( C^{\Delta, \omega}_{\vartheta_2\theta} +
D^{\Delta}_{\vartheta_2\theta} \right) Q_{\theta\vartheta_1}^{(n-1)}
(t), \quad n\in \mathbb{N}.
\end{equation*}
Then the continuous differentiability of the limit and (\ref{bed7})
follow by standard arguments.
%TCIMACRO{\TeXButton{\hfill}{\hfill}}%
%BeginExpansion
\hfill%
%EndExpansion
$\square $

Now let $k_t$ be as in (\ref{T8}). Then by (\ref{24a}) and
(\ref{bed7}) we conclude that it has all the properties assumed in
Definition \ref{S1df} and hence solves (\ref{T1}). Then to complete
the proof of Theorem \ref{1tm} we have to show that this is a unique
solution.

\subsection{Proving the uniqueness}

Since the problem in (\ref{T1}) is linear, it is enough to show that
its version with the zero initial condition has only the zero
solution. Let $u_t\in \mathcal{K}_\vartheta$ be a solution of this
version. Take some $\vartheta' >\vartheta$ and then $t>0$ such that
$t< T(\vartheta', \vartheta_0)$. Clearly, $u_t$ solves (\ref{T1})
also in $\mathcal{K}_{\vartheta'}$. Thus, it can be written down in
the following form
\begin{equation}
  \label{fra1}
u_t = \int_0^t S_{\vartheta'\vartheta''} (t-s) \left( C^{\Delta,
\omega}_{\vartheta''\vartheta} + D^{\Delta}_{\vartheta''\vartheta}
\right) u_s d s,
\end{equation}
where $u_t$ on the left-hand side (resp. $u_s$ on the right-hand
side) is considered as an element of $\mathcal{K}_{\vartheta'}$
(resp. $\mathcal{K}_{\vartheta}$) and $\vartheta'' \in (\vartheta,
\vartheta')$. Let us show that for all $t<T(\vartheta,
\vartheta_0)$, $u_t = 0$ as an element of $\mathcal{K}_{\vartheta}$.
In view of the embedding $\mathcal{K}_{\vartheta}\hookrightarrow
\mathcal{K}_{\vartheta'}$, this will follow from the fact that $u_t
= 0$ as an element of $\mathcal{K}_{\vartheta'}$. For a given $n\in
\mathbb{N}$, we set $\epsilon = (\vartheta' - \vartheta)/2n$ and
$\alpha^l = \vartheta + l \epsilon$, $l=0, \dots , 2n$. Then we
repeatedly apply (\ref{fra1}) and obtain
\begin{eqnarray*}
 u_t & = & \int_0^t \int_0^{t_1} \cdots \int_0^{t_{n-1}}
S_{\vartheta'\vartheta^{2n-1}} (t-t_1) \left( C^{\Delta,
\omega}_{\vartheta^{2n-1}\vartheta^{2n-2}} +
D^{\Delta}_{\vartheta^{2n-1}\vartheta^{2n-2}} \right) \times
\\ & \times & \cdots \times S_{\vartheta^2\vartheta^{1}} (t_{n-1}-t_n)
\left( C^{\Delta, \omega}_{\vartheta^{1}\vartheta} +
D^{\Delta}_{\vartheta^{1}\vartheta} \right) u_{t_n}
 d t_n
\cdots d t_1.
\end{eqnarray*}
Like in (\ref{Iwv}), we then get from the latter
\begin{eqnarray*}
% \label{fraa1}
\|u_t\|_{\vartheta'} & \leq & \frac{t^n}{n!} \prod_{l=1}^{n}
\|C^{\Delta, \omega}_{\vartheta^{2l-1}\vartheta^{2l-2}} +
D^{\Delta}_{\vartheta^{2l-1}\vartheta^{2l-2}}\| \sup_{s\in [0,t]}
\|u_s\|_{\vartheta} \\[.2cm] &\leq & \frac{1}{n!} \left( \frac{n}{e}\right)^n
\left( \frac{2 t (\omega + 2 \langle b \rangle)
e^{\vartheta'}}{\vartheta' - \vartheta}\right)^n \sup_{s\in [0,t]}
\|v_s\|_{\vartheta}. \nonumber
\end{eqnarray*}
This implies that $u_t =0$ for $t < (\vartheta' - \vartheta)/ 2
(\omega + 2 \langle b \rangle) e^{\vartheta'} $. To prove that $u_t
=0$ for all $t$ of interest one has to repeat the above procedure
appropriate number of times.

\section*{Acknowledgment}
The present research was supported by the European Commission under
the project STREVCOMS PIRSES-2013-612669.

\end{document}